\documentclass[10pt,letterpaper,twocolumn]{article}

\usepackage[
  top=0.6in,
  bottom=0.8in,
  left=0.74in,
  right=0.74in
]{geometry}
\usepackage{enumitem}
\usepackage{libertine}
\usepackage[libertine]{newtxmath}

\usepackage{amsmath}

\usepackage{amssymb}
\usepackage{amsthm}

\allowdisplaybreaks
\newtheorem{theorem}{Theorem}

\newtheorem{corollary}{Corollary}

\theoremstyle{definition}
\newtheorem{definition}{Definition}

\theoremstyle{remark}

\usepackage{graphicx}
\usepackage{float}
\usepackage{booktabs}
\usepackage{subcaption}
\usepackage{multirow}
\usepackage{xcolor}

\usepackage{algorithm}
\usepackage{algpseudocode}

\usepackage[hidelinks]{hyperref}

\usepackage{authblk}

\let\oldbibliography\thebibliography
\renewcommand{\thebibliography}[1]{%
  \oldbibliography{#1}%
  \setlength{\itemsep}{1pt}%
  \setlength{\parskip}{1pt}%
  \setlength{\parsep}{1pt}%
}

\usepackage[numbers,sort&compress]{natbib}

\newcommand{\etal}{\textit{et al.}}

\title{Enabling Multi-Client Authorization in Dynamic SSE}

\author[1]{Seydina Ousmane Diallo}
\author[1]{Maryline Laurent}
\author[1]{Nesrine Kaaniche}

\affil[1]{SAMOVAR, Télécom SudParis, Institut Polytechnique de Paris, 91120 Palaiseau, France

\textit{\{seydina-ousmane.diallo, maryline.laurent, nesrine.kaaniche\}@telecom-sudparis.eu}}

\date{}

\begin{document}

\maketitle

\begin{abstract}
Outsourcing encrypted data to the cloud creates a fundamental tension between data privacy and functional searchability. 
Current Searchable Symmetric Encryption (SSE) solutions frequently have significant limitations, such as excessive metadata leakage, or a lack of fine-grained access control. These issues restrict the scalability of secure searches in real-world applications where multiple clients require different levels of authorization.
Our paper proposes MASSE, a dynamic multi-client SSE scheme incorporating attribute-based access control, which expands the OXT framework. With MASSE, clients are restricted sto searching for keywords authorized by their specific attribute sets, and the server remains unaware of the keywords and attributes. MASSE supports practical dynamic updates to documents, and client authorizations, including revocation, without requiring re-encryption of the database or indices, or a large number of interactions.
We formally prove the security of MASSE, that is, forward and backward privacy under a well-defined leakage profile, and token unforgeability. 
An experimental evaluation in a database containing 100 keywords, each associated with 150 documents, demonstrates the practical efficiency of MASSE. It takes less than two seconds to generate 10–100 keyword queries and 14 seconds to retrieve 50 matching documents. 
Theoretical results show that MASSE outperforms competing solutions, including OXT, and can be scaled to large encrypted databases. MASSE is also suitable for dynamic cloud deployments.

\textbf{Keywords:}
Searchable Encryption, SSE, Multi-Client, Attribute-Based SSE, Access Control, Revocation, OXT
\end{abstract}
\section{Introduction}
To protect the privacy of the large volumes of data they store, stakeholders are accustomed to outsourcing it to the cloud in an encrypted form. This raises the issue of how to search for these data without decrypting the entire database and allowing other stakeholders to access it~\cite{shar}.

Searchable Encryption (SE)~\cite{article} has emerged as a promising cryptographic primitive that enables keyword-based searches to be performed directly over encrypted data. SE allows clients to query outsourced encrypted databases and retrieve relevant results without revealing either the queried keywords or the underlying data to the outsourcing server.
Since its introduction by Song \etal~\cite{SONG0}, SE has evolved into two main paradigms. \textsf{Symmetric Searchable Encryption (SSE)}~\cite{survey} enables a data owner to encrypt a collection of documents using a secret key and to perform keyword searches using the same key. In contrast, \textsf{Public Key Encryption with Keyword Search (PEKS)}~\cite{Boneh1} relies on public key cryptographic primitives to allow searches over encrypted data, offering greater flexibility at the cost of increased computational overhead.
SSE schemes are generally efficient and well-suited to large-scale cloud environments~\cite{survey}. However, most existing constructions assume a single-user setting and lack practical features such as support for dynamic updates, fine-grained access control, or efficient revocation. These limitations pose significant challenges when extending SSE to multi-client scenarios, in which multiple authorized clients must be able to search securely over an outsourced encrypted data. Without careful architectural design, malicious clients could collude to escalate their privileges beyond their authorized rights.
 Most existing SSE schemes~\cite{curt, KM10, Bost17} were originally designed for single-keyword searches, retrieving documents associated with a single keyword per query. This main functionality limits the support for more complex queries required by modern applications. To address this issue, Cash \etal~\cite{Cash14} proposed the \textsf{OXT} scheme, which constructs oblivious cross-tags to enable efficient conjunctive keyword searches with sublinear complexity.
However, \textsf{OXT} was not designed for multi-client environments and remains a static scheme. It does not support the efficient updating of data, such as the addition or deletion of encrypted documents. This limits its applicability in dynamic settings where the database evolves over time.
Several works~\cite{DU20, pat} have extended \textsf{OXT} to multi-client and dynamic scenarios, but these often compromise privacy guarantees or incur significant computational and storage overhead.

\paragraph*{Our Contributions.} 
\textsf{MASSE} extends the \textsf{OXT} framework to support multi-client scenarios with fine-grained access control and efficient revocation, while providing formally defined security guarantees.
The \textsf{MASSE} scheme relies on two key structures: one that links each keyword to a set of attributes, and another that links each attribute to its authorized keywords. These structures are used to encrypt the searchable index, which maps each keyword to the set of documents containing it. This enables clients to generate search tokens only for keywords permitted by their authorized attributes. 
The attributes themselves are never exposed in the encrypted database, thus preserving client privacy and preventing privilege escalation. 
This design means that the encrypted database only needs to be created once, with each keyword being encrypted only once. This avoids per-client re-encryption and reduces both storage and computational overheads, even in large-scale deployments. 
\textsf{MASSE} supports dynamic operations, enabling the data owner to efficiently insert or delete documents, register new clients with additional attribute sets, and manage user revocation. 
The revocation mechanism is lightweight and immediately prevents revoked clients from performing searches.
Finally, we conduct a formal security analysis based on clearly defined leakage profiles and evaluate performance using a prototype implementation.

The remainder of the paper is organized as follows. Section~\ref{sec: rel} surveys the related work. Section~\ref{sec: system} introduces our system and threat models. Section~\ref{sec: Pre} provides an overview of our proposed \textsf{MASSE} scheme. Section~\ref{sec: conc} gives the concrete construction of \textsf{MASSE}, with detailed algorithms. Section~\ref{sec: security-analysis} presents a security analysis of \textsf{MASSE}, while Section~\ref{sec: Dis} provides a performance analysis. Section~\ref{sec:con} concludes the paper. 

\section{Related Work}
\label{sec: rel}

Song \etal~introduced the first SSE scheme enabling keyword search over encrypted data~\cite{SONG0}. 
Its main limitation is its inefficiency, since each keyword must be individually encrypted and exhaustively checked during the search. 
Subsequent works improved efficiency, strengthened security, and added the following functionalities, as summarized in Table~\ref{tab:comparison}. 

-- \textsf{Query expressiveness:} 
Query expressiveness refers to the ability of a searchable encryption scheme to support queries that are more complex than a single-keyword search, such as conjunctive keyword queries.
Building on Song \etal's proposal, Goh~\cite{Goh} introduced Bloom filter–based indices combined with pseudorandom functions (PRFs) to improve search efficiency and achieve semantic security against chosen-keyword attacks. 
However, Goh’s scheme still required a linear search time that was proportional to the number of documents. 
Curtmola \etal~\cite{curt} addressed this limitation by using inverted indices, to achieve sub-linear search time, and they formalized SSE security within the real/ideal world framework. 
However, these early works, focused on single-keyword searches, which limits their applicability to large-scale databases.
 
Several schemes have been proposed to support more expressive queries and address conjunctive searches. Golle \etal~\cite{Gol} proposed a framework for keyword conjunctions, which was later extended by Ballard \etal~\cite{Ba} and Jin \etal~\cite{Jin} for multi-keyword operations. 
While these constructions allowed conjunctive queries, the search cost remained linear in the database size. 
A major improvement was introduced by Cash \etal~\cite{Cash14} in their work on \textsf{OXT}, the first practical searchable encryption scheme achieving sub-linear search time for conjunctive queries. \textsf{OXT} employs a secure inverted index, called \(\mathsf{TSet}\), to retrieve candidate documents matching the least frequent keyword, and an algebraic verification structure, called \(\mathsf{XSet}\), to efficiently verify the remaining keywords. However, \textsf{OXT} inherently leaks correlations between queried keywords, specifically which documents contain multiple keywords together, a leakage formally known as the keyword-pair result pattern (KPRP). To address this issue, Lai \etal~\cite{Lai} proposed \textsf{HXT}, which extends \textsf{OXT} by obfuscating the verification of secondary keywords through query-dependent randomization, thereby preventing KPRP leakage.

-- \textsf{Dynamic Updates:} 
Despite these advances, the above schemes remain static and unable to handle insertions or deletions efficiently. 
Kamara \etal~\cite{KMR12} proposed one of the first dynamic SSE (DSSE) schemes, which uses keyword-indexed linked lists and auxiliary encrypted structures to support updates while maintaining efficient searching. However, access patterns were still leaked. Stefanov \etal~\cite{Stf2013} introduced forward privacy (FP) and backward privacy (BP): FP ensures that new insertions do not reveal links to previous queries, and BP ensures that deletions do not affect future queries. 
Bost \etal~\cite{Bost16,Bost17} formalized FP and BP, and categorized three levels of backward privacy. Type-I BP, the strongest, reveals only documents matching the current query. Type-II BP additionally exposes timestamps, i.e., the times at which insertions or deletions occur. Type-III BP further discloses which deletions cancel prior insertions, allowing the server to potentially determine the list of added or deleted documents and correlate operations over time.

Using these definitions, Bost \etal~\cite{Bost17} constructed DSSE schemes with different privacy levels. MONETA (Type-I) uses a Timestamped Write-Authenticated Memory (TWRAM) along with PRFs and multi-set hash functions to authenticate query results. FIDES achieves Type-II backward privacy with moderate efficiency. JANUS and DIANA$_{del}$ are Type-III schemes optimized for performance.
Sun \etal~\cite{Sun} further improved JANUS with Janus++, using Symmetric Puncturable Encryption (SPE) to efficiently maintain Type-II backward privacy. 
ODXT, recently proposed by Patranabis \etal~\cite{pat}, extends DSSE to support conjunctive queries with forward and Type-II backward privacy, but it does not support multiple clients.

-- \textsf{Multi-client SSE (MSSE):} 
Building on dynamic SSE schemes for single-keyword and conjunctive queries, multi-client support with access control and revocation has been explored. However, careful design is still required to balance security and efficiency.
Early schemes, such as that by Curtmola \etal~\cite{curt}, required data to be duplicated for each client, resulting in significant storage overhead on both the server and clients. 

Sun \etal~\cite{sun16} applied attribute-based encryption to enforce access control in a non-interactive manner, enabling the server to verify users' permissions without requiring direct interaction with the data owner. However, dynamic updates were limited, and storage efficiency remained an issue.
 
Du \etal~\cite{DU20} proposed a dynamic MSSE scheme based on \textsf{OXT}, in which clients are granted access to keyword ranges certified by digital signatures issued by the data owner. New authorizations can be added without involving existing clients. In Du \etal~\cite{DU20}'s scheme, revocation requires updating the index entries corresponding to removed clients. However, the scheme does not support dynamic updates of the underlying database, and sharing keywords among multiple clients leads to duplicated search indices, thereby increasing storage overhead.
Wanshan \etal~\cite{X2022DynamicSSE} proposed the \textsf{DMC-SSE} scheme, which combines searchable symmetric encryption with access control lists (ACLs) to support multi-client environments and boolean queries. Although this design simplifies user addition and revocation, the size of the ACL increases linearly with the number of users.
Building on ODXT~\cite{pat}, Hu \etal~\cite{Hu2025SEAC} proposed \textsf{SEAC}, a dynamic multi-client SSE scheme using a dynamic RSA-based accumulator for access control. Client permissions for search and updates are verified via accumulator membership proofs maintained by the data owner.
In \textsf{SEAC}, updates and permission management incur significant overhead: each update involves interaction among the client, data owner, and server, along with multiple exponentiations to maintain the accumulator. Client authorization and revocation require updating the accumulator and broadcasting auxiliary information, forcing all remaining clients to locally update their witnesses using the extended Euclidean algorithm.

Finally, \textsf{SEAC} incorporates a random blinding factor to prevent linking multiple requests across sessions.

-- \textsf{Client Administration:} 
This feature measures the practicality of implementing the solution.  
In contrast to \textsf{MASSE}, \textsf{SEAC} is deemed impractical, since updating or revoking a client require interactions with each of the other clients. 


\begin{table}[h]
  \caption{Comparison of Representative SSE Schemes}
  \label{tab:comparison}
  \centering
  \resizebox{\columnwidth}{!}{%
  \begin{tabular}{lcccccc} 
    \toprule
    Scheme & Query & MC & Dyn & FP & BP & CA \\
    \midrule
    Song \etal~\cite{SONG0} & single & $\times$ & $\times$ & $\times$ & $\times$ & $\times$ \\
    Kamara \etal~\cite{KMR12} & single & $\times$ & \textcolor{blue}{$\checkmark$} & $\times$ & $\times$ & $\times$ \\
    Cash \etal~(\textsf{OXT})~\cite{Cash14} & conj/bool & $\times$ & $\times$ & $\times$ & $\times$ & $\times$ \\
    Bost \etal~\cite{Bost16} & single & $\times$ & \textcolor{blue}{$\checkmark$} & \textcolor{blue}{$\checkmark$} & \textcolor{blue}{$\checkmark$} & $\times$ \\
    Patranabis \etal~\cite{pat} & conj/bool & $\times$ & \textcolor{blue}{$\checkmark$} & \textcolor{blue}{$\checkmark$} & $\times$ & $\times$ \\
    Curtmola \etal~\cite{curt} & single & \textcolor{blue}{$\checkmark$} & $\times$ & $\times$ & $\times$ & $\times$ \\
    Du \etal~(\textsf{DMSSE})~\cite{DU20} & conj/bool & \textcolor{blue}{$\checkmark$} & \textcolor{blue}{$\checkmark^*$} & $\times$ & $\times$ & $\times$  \\
    Hu \etal~(\textsf{SEAC})~\cite{Hu2025SEAC} & conj/bool & \textcolor{blue}{$\checkmark$} & \textcolor{blue}{$\checkmark$} & \textcolor{blue}{$\checkmark$} & \textcolor{blue}{$\checkmark$} & $\times$\\
    \textbf{\textsf{MASSE}} (ours) & conj/bool & \textcolor{blue}{$\checkmark$} & \textcolor{blue}{$\checkmark$} & \textcolor{blue}{$\checkmark$} & \textcolor{blue}{$\checkmark$} & \textcolor{blue}{$\checkmark$} \\
    \bottomrule
  \end{tabular}%
  }
  \footnotesize{
    MC: Multi-Client support; Dyn: Dynamic updates; FP: Forward Privacy; BP: Backward Privacy; CA: Client Administration. \\
    \textcolor{blue}{$\checkmark^*$} indicates partial dynamic support (updates apply only to client permissions).
  }
\end{table}
\section{System and Threat Models}
\label{sec: system}
This section presents the \textsf{MASSE} scheme. We first define the notations, then describe the system model, followed by the threat model, and finally summarize the security and privacy properties.

\subsection{Notations}
Table~\ref{tab:notations} introduces the main notations and abbreviations used throughout the paper. 
Let $\lambda \in \mathbb{N}$ be the security parameter. 
The set $\{0,1\}^\lambda$ denotes all binary strings of length $\lambda$, and $\{0,1\}^*$ represents binary strings of arbitrary length. 
The cardinality of a vector $v$ is written $|v|$, and its $i$-th component is denoted $v[i]$ for $i \in [|v|]$. 
For any positive integer $n$, we define $[n] = \{1,2,\dots,n\}$. 
Let $d$ denote the number of documents and $D$ the number of distinct keywords in the database. 
If $A$ is a deterministic algorithm, we write $y \gets A(x)$ for its output on input $x$. 
For a probabilistic algorithm, the output is a random variable, denoted by $y \overset{\$}{\gets} A(x)$. A conjunctive query is denoted by $q = \Psi(\overline{W})$, where $\Psi$ is a boolean formula over a subset of keywords $\overline{W} \subseteq W$.

\begin{table}[h]
\small
\centering
\caption{Summary of notations used in our scheme}
\label{tab:notations}
\renewcommand{\arraystretch}{1.1}
\resizebox{\columnwidth}{!}{%
\begin{tabular}{cl}
\toprule
\textbf{Symbol} & \textbf{Description} \\
\midrule
$\mathsf{DB}$ & Plaintext database: $\{(\mathsf{ind}_i, w_j)\}^{d,D}_{i=1,j=1}$ \\
$\mathsf{EDB}$ & Encrypted database \\
$i$ & Index of the $i$-th document \\
$j$ & Index of the $j$-th keyword \\
$t$ & Index of the $t$-th attribute \\
$c$ & Index of the client \\
$w_j$ & $j$-th keyword associated with one or more documents \\
$\text{ind}_i$ & Unique identifier of the $i$-th document \\
$W$ & Set of all distinct keywords \\
$\overline{W}_\mathcal{C}$ & Keywords authorized for client $\mathcal{C}$ \\
$q = \Psi(\overline{W}_\mathcal{C})$ & Boolean query over keywords in $\overline{W}_\mathcal{C}$ \\
$\mathsf{DB}(q)$ & Documents satisfying query $q$ \\
$\mathsf{DB}(w_j)$ & Documents containing keyword $w_j$ \\
$\mathsf{A}$ & Set of attributes $a$ \\
$\mathsf{A}_\mathcal{C}$ & Attributes associated with client $\mathcal{C}$ \\
$W_t$ & Keywords associated with attribute $a_t$ \\
$A_j$ & Attributes associated with keyword $w_j$ \\
$\mathsf{S}_x = \{(w_j, x)\}$ & Keyword–key pairs associated with secret $x$ \\
$K$ & Master secret key of the scheme \\
$v_j$ & Encryption key of document identifiers linked to $w_j$ \\
$h_j$ & Key derived from the relation between $a_t$ and $w_j$ \\
$\textsf{Tk}_q$ & Search token generated for query $q$ \\
$\mathsf{R}$ & Search result set returned by server $\mathcal{P}$ \\
$\mathsf{keys}_\mathcal{C}$ & Keys assigned to client $\mathcal{C}$ \\
$\mathsf{D}_\mathcal{C}$ & Access dictionary entry for client $\mathcal{C}$ \\
\bottomrule
\end{tabular}%
}
\end{table}

\subsection{System Model } 
\textsf{MASSE} relies on three main entities: the Data Owner $\mathcal{O}$, the Cloud Server $\mathcal{P}$ (for Provider), and the Client $\mathcal{C}$.

\begin{itemize}
    \item \textsf{Data Owner} $\mathcal{O}$. It initializes the system by encrypting the database $\mathsf{EDB}$ and constructing search indices. 
    It associates each client's attribute (or a set of attributes) with a subset of authorized keywords $\overline{W}_\mathcal{C}$ and generates dedicated search keys for each client $\mathcal{C}$. 
    $\mathcal{O}$ signs the ranges of authorized keywords per client to create authorization tokens, and sends these signatures along with the encrypted database $\mathsf{EDB}$ and search indices to $\mathcal{P}$ to enforce access control. 
    It is also responsible for performing dynamic updates on the database and client authorizations.
    
    \item \textsf{Cloud \, Server \quad $\mathcal{P}$}. The encrypted database $\mathsf{EDB}$ and its associated search indices are stored by $\mathcal{P}$. Upon receiving a search request from a client, $\mathcal{P}$ first verifies the client's authorization using the corresponding authorization token. $\mathcal{P}$ then runs the search algorithm and returns the matching results in encrypted form. Furthermore, the  server handles update operations, such as adding or deleting document-keyword pairs, when receiving update instructions from the data owner.

    \item \textsf{Client} $\mathcal{C}$. This is an authorized client  performing encrypted search queries on the $\mathsf{EDB}$. 
    $\mathcal{C}$ holds secret keys and an authorization token issued by $\mathcal{O}$. It generates secure search queries and sends them to $\mathcal{P}$. $\mathcal{C}$ is only permitted to query the subset of keywords for which it is authorized, and it only receives the encrypted identifiers of the matching documents. 
    \end{itemize}



\subsection{Threat Model}
In our threat model, $\mathcal{O}$ is fully trusted.
$\mathcal{C}$ may behave maliciously by generating unauthorized search tokens or colluding with other clients.
$\mathcal{P}$ is honest-but-curious, meaning that it follows the protocol but attempts to infer information from encrypted data or queries.
We assume that any collusion is limited to clients, and that the server does not collaborate with them.

\subsection{Security and Privacy Properties}
\label{subsec:security-properties}
\textsf{MASSE} should fulfill the following security properties:

\textbf{Indistinguishability of Data and Queries.}
Assuming the server to be honest-but-curious, it should not learn any information about the encrypted database or the issued queries beyond what is explicitly revealed by the leakage function.  

\textbf{Token Unforgeability.} 
Each client is restricted to its authorized subset of keywords. 
Even if they collude, it is computationally infeasible for malicious clients to generate a valid search token for any keyword they are not authorized to use. 
This prevents privilege escalation and enforces strict access control.

\textbf{Forward Privacy.}
Forward privacy ensures that the addition of new documents does not reveal information about previous search queries or indexed data.

\textbf{Backward Privacy.}
\textsf{MASSE} guarantees Type-II backward privacy: upon a deletion operation, the server may learn that a deletion has occurred, but cannot identify the deleted document nor link it to any previous or subsequent search queries.

\section{Overview of MASSE}
\label{sec: Pre}
This section provides a brief overview of the \textsf{MASSE} scheme, highlighting its foundation on \textsf{OXT} and its extensions for multi-client authorization and dynamic operations.

\subsection{Introduction to OXT Scheme}
In line with the standard SSE syntax, we provide an overview of the \textsf{OXT} protocol~\cite{Cash14}, focusing on the Setup and Search phases. The \textsf{OXT} scheme enables sub-linear, privacy-preserving conjunctive keyword searches through two encrypted data structures, namely $\mathsf{Tset}$ and $\mathsf{Xset}$. It is defined by the tuple
\(\pi_{\textsf{OXT}} = (\mathsf{Setup}, \mathsf{Search}).\)

The structure $\mathsf{Tset}$ consists of secure inverted lists: for each keyword $w$, it stores encrypted identifiers of documents containing $w$. The structure $\mathsf{Xset}$ is an algebraic verification set containing elements derived from pairs $(w, ind)$, enabling conjunctive checks without revealing additional information.

\paragraph*{Setup Phase.}
\((K, \mathsf{EDB}) \leftarrow \mathsf{Setup}(\lambda, \mathsf{DB})\)
Given a security parameter $\lambda$ and a keyword-indexed database $\mathsf{DB}$, the owner $\mathcal{O}$ proceeds as follows:

\begin{enumerate}
    \item $\mathcal{O}$ generates secret keys 
    $K = (k_S, k_I, k_T, k_z, k_X)$, each sampled uniformly from $\{0,1\}^\lambda$. 
    Let 
    \(
    F: \{0,1\}^\lambda \times \{0,1\}^* \rightarrow \{0,1\}^\lambda
    \quad \text{and} \quad
    F_p: \{0,1\}^\lambda \times \{0,1\}^* \rightarrow \mathbb{Z}_p^*
    \)
    be pseudo-random functions.

    \item For each keyword $w \in W$ and document identifier $ind \in DB(w)$, and for each counter $c$ associated with occurrences of $w$, $\mathcal{O}$ computes
    \(
    k_e = F(k_S, w), \quad
    \mathsf{xind} = F_p(k_I, ind), \quad
    z = F_p(k_z, w \,\|\, c),
    \)
    then $\mathcal{O}$ sets
    \(y = \mathsf{xind} \cdot z^{-1}.\)
    The identifier is encrypted as
    \(e_2 = \mathsf{Sym.Enc}(k_e, ind).\)
    Next, $\mathcal{O}$ computes the search tag
    \(\mathsf{stag} = F(k_T, w)\)
    and stores the pair $(y, e_2)$ in $\mathsf{Tset}[\mathsf{stag}]$. It also computes the verification tag
    \(\mathsf{xtag} = g^{F_p(k_X, w)\cdot \mathsf{xind}}\)
    and inserts it into $\mathsf{Xset}$.
    Finally, the encrypted database is defined as
    \(\mathsf{EDB} = (\mathsf{Tset}, \mathsf{Xset}).\)
\end{enumerate}

\paragraph*{Search Phase.}
\(\mathsf{R} \leftarrow \mathsf{Search}(K, q, \mathsf{EDB})\)
To evaluate a conjunctive query $q = \psi(w_1, \dots, w_n)$, $\mathcal{O}$ selects the least frequent keyword $w_1$ as the pivot (s-term), while the remaining keywords $w_2, \dots, w_n$ serve as x-terms.
First, $\mathcal{O}$ computes
\(\mathsf{stag} = F(k_T, w_1).\)
Then, for each keyword $w_j$ with $j \ge 2$ and for each counter $c$ associated with $w_1$, it computes
\(\mathsf{xtoken}[c, j] = g^{F_p(k_z, w_1 \,\|\, c)\cdot F_p(k_X, w_j)}.\)

Upon receiving $\mathsf{stag}$ and the corresponding $\mathsf{xtoken}[c,j]$, the server $\mathcal{P}$ retrieves all pairs $(y, e_2)$ from $\mathsf{Tset}[\mathsf{stag}]$. For each entry and each $j \ge 2$, it verifies whether
\(\mathsf{xtoken}[c,j]^y \in \mathsf{Xset}.\)
If all checks are successful, the ciphertext $e_2$ is added to the set of results $\mathsf{R}$. Finally, $\mathcal{O}$ decrypts each $e_2$ using $k_e$ to recover the identifiers of the matching document.

\subsection{MASSE as an Extension of \textsf{OXT} Scheme}
We introduce \textsf{MASSE}, an SSE scheme that extends \textsf{OXT} to a dynamic multi-client setting with attribute-based access control. In \textsf{MASSE}, the search index is encrypted using keys derived from keyword–attribute relationships, enabling fine-grained authorization without relying on public-key encryption.
\textsf{MASSE} adopts $\mathsf{Tset}$ and $\mathsf{Xset}$ from \textsf{OXT} and further introduces a new structure $\mathsf{Cset}$ based on a single tag $\mathsf{ctag}$ per keyword. In contrast to the scheme of Du \etal~\cite{DU20}, where one $\mathsf{ctag}$ is generated for each keyword–client pair, our construction avoids such duplication. The $\mathsf{ctag}$ serves as a cryptographic witness that authenticates authorized clients during search.
Client permissions are managed through a dynamic authorization dictionary, which supports efficient client addition, revocation, and enforcement of attribute-based access policies. Figure~\ref{fig:sequence_diagram} illustrates the core interactions of the system, where each entity independently executes $\textsf{KeyGen}$ to generate its own key pair.

\begin{figure*}[h]
  \centering
  \includegraphics[width=0.68\textwidth, keepaspectratio]{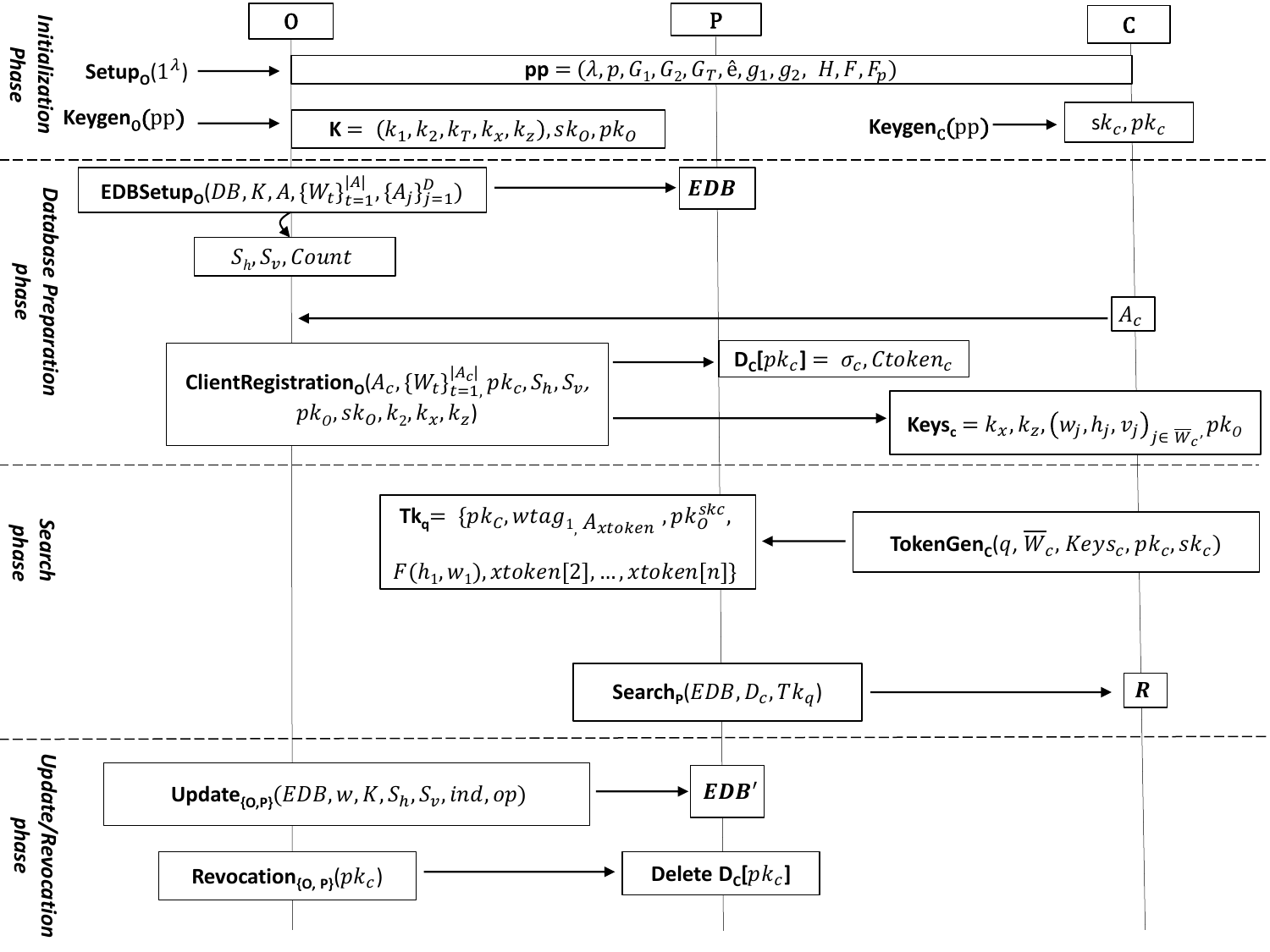}
  \caption{High-level system workflow of the MASSE scheme}
  \label{fig:sequence_diagram}
\end{figure*}

\subsection{MASSE Building Blocks}
\label{subsec:phases}

The \textsf{MASSE} scheme is organized into four main phases: \textsf{Initialization}, \textsf{Database Preparation}, \textsf{Search}, and \textsf{Update/Revocation}. 
Each phase consists of several algorithms, as described below.

\textsf{Initialization.} This phase sets up the foundational cryptographic environment for the \textsf{MASSE} system. 
It generates the public parameters and the key material required by both $\mathcal{O}$ and the clients, and includes the following algorithms:
\begin{itemize}
    \item $\mathsf{pp} \gets \mathsf{Setup}_\mathcal{O}(1^\lambda)$. Run by $\mathcal{O}$. 
    It takes the security parameter $1^\lambda$ as input and outputs the public parameters $\mathsf{pp}$.
    
    \item $(K,\,(sk_\mathcal{O}, pk_\mathcal{O})) \gets \mathsf{KeyGen}_\mathcal{O}(\mathsf{pp})$. 
    Run by $\mathcal{O}$, it outputs the master secret key $K$ and the owner secret/public key pair $(sk_\mathcal{O} \in \mathbb{Z}_p, pk_\mathcal{O} = g_2^{sk_\mathcal{O}})$.
    
    \item $(sk_\mathcal{C}, pk_\mathcal{C}) \gets \mathsf{KeyGen}_\mathcal{C}(\mathsf{pp})$. 
    Run by $\mathcal{C}$, it outputs the client secret/public key pair $(sk_\mathcal{C} \in \mathbb{Z}_p, pk_\mathcal{C} = g_1^{sk_\mathcal{C}})$.
\end{itemize}

\textsf{Database Preparation.} 
This phase involves preparing the encrypted database and distributing client credentials. 
It ensures that each client can only search for keywords authorized by their attributes and includes the following algorithms:

\begin{itemize}
    \item $(\mathsf{EDB},\mathsf{S}_h,\mathsf{S}_v, \mathsf{Count}) \gets \mathsf{EDBSetup}_\mathcal{O}\big( \mathsf{DB}, K, A, \\ \{\mathsf{W}_t\}_{t=1}^{|A|}, \{\mathsf{A}_j\}_{j=1}^{D} \big)$. 
    Run by $\mathcal{O}$, this algorithm encrypts the database and initializes:
    a structured encrypted database $\mathsf{EDB}$,
    a table of key-attribute keys $\mathsf{S}_h$, 
    a table of document keys $\mathsf{S}_v$ and a table of counter
    $\mathsf{Count}$, which stores the number of documents associated with each keyword.
    For each keyword, a fixed number of encrypted dummy documents are added during the database encryption, in order to allow future updates.
    \item $(\mathsf{D}_\mathcal{C},\, \mathsf{keys}_\mathcal{C}) \gets 
    \mathsf{ClientRegistration}_\mathcal{O}\big(
        \mathsf{A}_\mathcal{C},
        \{\mathsf{W}_t\}_{t=1}^{|\mathsf{A}_\mathcal{C}|}, \\
        pk_\mathcal{C}, \mathsf{S}_h, \mathsf{S}_v 
    \big)$. 
    Run by $\mathcal{O}$, it outputs the client dictionary $\mathsf{D}_\mathcal{C}$ and the secret keys $\mathsf{keys}_\mathcal{C}$ for each client $c$.
\end{itemize}

\textsf{Search.} 
This phase allows clients to query the encrypted database while preserving the confidentiality of their queries and authorized keywords. 
It consists of the client generating a search token and the server executing the search. It includes the following algorithms:

\begin{itemize}
    \item $\textsf{Tk}_q \gets \mathsf{TokenGen}_\mathcal{C}(q, \overline{W}_\mathcal{C}, \mathsf{keys}_\mathcal{C}, pk_\mathcal{C}, sk_\mathcal{C})$. 
    Run by client $\mathcal{C}$, this algorithm produces the search token $\textsf{Tk}_q$ for the query $q \subseteq \overline{W}_\mathcal{C}$, where $\overline{W}_\mathcal{C}$ denotes the set of keywords authorized for client $c$.

    \item \(\textsf{R} \gets \mathsf{Search}_\mathcal{P}\left(\mathsf{EDB}, \mathsf{D}_\mathcal{C}, \textsf{Tk}_q\right)\). 
    Run by provider $\mathcal{P}$, this algorithm first verifies the client's authorization. 
    If the client is authorized, it returns the encrypted set of document identifiers matching the query, denoted by $\mathsf{R}$. 
    Otherwise, it returns an empty set.

\end{itemize}

\textsf{Update and Revocation.} 
This phase enables dynamic operations on the database and enforces client revocation. It includes:

\begin{itemize}
  \item $\mathsf{EDB'} \gets \mathsf{UpdateEDB}_{\mathcal{O}, \mathcal{P}}(\mathsf{EDB}, w_j, K, \mathsf{S}_h, \mathsf{S}_v, ind_i, op)$. 
  Jointly run by $\mathcal{O}$ and $\mathcal{P}$, it updates the encrypted database $\mathsf{EDB}$ according to insertion or deletion operations, producing an updated version $\mathsf{EDB'}$ with refreshed indices.

  \item $\mathsf{Revoke}_{\mathcal{O}, \mathcal{P}}(pk_\mathcal{C}, \mathsf{D}_\mathcal{C})$.
  Jointly run by $\mathcal{O}$ and $\mathcal{P}$, this operation takes as input the client public key $pk_\mathcal{C}$ and the corresponding authorization dictionary $\mathsf{D}_\mathcal{C}$, and revokes the client’s authorization without affecting other clients or re-encrypting the database.
\end{itemize}

\section{Concrete Construction of MASSE}
\label{sec: conc}
This section provides a detailed description of the concrete construction of \textsf{MASSE}, including all the algorithms and phases.
\subsection{ \textsf{MASSE} Algorithms}
The \textsf{MASSE} construction is formally defined by the following suite of algorithms: the foundational procedures $\mathsf{Setup}_{\mathcal{O}}$,$\mathsf{KeyGen}_{\mathcal{O}}$, and $\mathsf{KeyGen}_{\mathcal{C}}$, and followed by the operational algorithms $\mathsf{EDBSetup}_{\mathcal{O}}$, $\mathsf{Registration}_{\mathcal{O, C}}$, $\mathsf{TokenGen}_{\mathcal{C}}$, $\mathsf{Search}_{\mathcal{P}}$, $\mathsf{Update}_{\mathcal{O, P}}$, and $\mathsf{Revocation}_{\mathcal{O, P}}$.

\begin{itemize}
    \item $\mathsf{pp} \leftarrow \mathsf{Setup}_\mathcal{O}(1^\lambda)$:
    given a security parameter $\lambda$, $\mathcal{O}$ generates the public parameters
    \(\mathsf{pp} \footnote{All entities are assumed to use the same system parameters pp unless otherwise stated.}\), where: \\
    \(\mathsf{pp} = (\lambda, p,\mathbb{G}_1,\mathbb{G}_2, \mathbb{G}_T, \hat{e}, g_1, g_2, H, F, F_p)\), 
    with $p$ being a prime number of $\lambda$ bits and 
    $\mathbb{G}_1$, $\mathbb{G}_2$, and $\mathbb{G}_T$ are cyclic groups of order $p$ and $\hat{e}: \mathbb{G}_1 \times \mathbb{G}_2 \rightarrow \mathbb{G}_T$ is a bilinear pairing. $g_1$ and $g_2$ are generators of $\mathbb{G}_1$ and $\mathbb{G}_2$, respectively. $H: \{0,1\}^* \rightarrow \{0,1\}^\lambda$ is a cryptographic hash function. $F: \{0,1\}^\lambda \times \{0,1\}^* \rightarrow \{0,1\}^\lambda$ is a pseudo-random function (PRF).  $F_p: \{0,1\}^\lambda \times \{0,1\}^* \rightarrow \mathbb{Z}_p^*$ is a PRF that outputs elements in the multiplicative group of the field $\mathbb{Z}_p$.
\end{itemize}
The data owner and each client generate their key pairs as follows:
    \begin{itemize}
        \item $(K, sk_\mathcal{O}, pk_\mathcal{O}) \leftarrow \mathsf{KeyGen}_{\mathcal{O}}(\mathsf{pp})$:  
        $\mathcal{O}$ samples five independent $\lambda$-bit secret keys $K = (k_1, k_2, k_T, k_x, k_z) \in \{0,1\}^\lambda$, used for encryption and secure query processing.  
        The owner also chooses a secret key $sk_\mathcal{O} \xleftarrow{\$} \mathbb{Z}_p^*$ and computes the corresponding public key $pk_\mathcal{O} = g_2^{sk_\mathcal{O}} \in \mathbb{G}_2$.

        \item $(sk_\mathcal{C}, pk_\mathcal{C}) \leftarrow \mathsf{KeyGen}_{\mathcal{C}}(\mathsf{pp})$:  
        Each client $\mathcal{C}$ independently selects a secret key $sk_\mathcal{C} \xleftarrow{\$} \mathbb{Z}_p^*$ and computes the public key $pk_\mathcal{C} = g_1^{sk_\mathcal{C}} \in \mathbb{G}_1$.
    \end{itemize}

 \paragraph{$(\mathsf{EDB}, \mathsf{S}_h, \mathsf{S}_v, \mathsf{Count}) \leftarrow 
\mathsf{EDBSetup}_\mathcal{O}(\mathsf{DB}, K, A, \{\mathsf{W}_t\}_{t=1}^{|\mathsf{A}|}, \\ \{\mathsf{A}_j\}_{j=1}^{D})$.}
Algorithm~\ref{alg:edbsetup}, run by $\mathcal{O}$, takes as input a database
\(\mathsf{DB} = \{(ind_i, w_j)\}\), master keys 
\(K = \{k_1, k_2, k_T, k_x, k_z\}\), an attribute set $\mathsf{A}$, and two sets of tables: $\{\mathsf{W}_t\}_{t=1}^{|\mathsf{A}|}$, where each $\mathsf{W}_t$ lists the keywords for the $t$-th attribute,  and $\{\mathsf{A}_j\}_{j=1}^{D}$, where each $\mathsf{A}_j$ lists the attributes associated with the $j$-th keyword.  
Here, $D$ denotes the total number of keywords in the database (DB).
The keys $(k_x,k_z)$ are later shared with clients to enable them to generate 
search queries independently of~$\mathcal{O}$. 
This algorithm outputs the encrypted database 
$\mathsf{EDB}=\{\mathsf{Tset},\mathsf{Xset},\mathsf{Cset}\}$ together with 
two key tables, 
$\mathsf{S}_h=\{(w_j,h_j)\}$ and $\mathsf{S}_v=\{(w_j,v_j)\}$, and an array 
$\mathsf{Count},$ storing the number of (real and dummy) entries per keyword. 

For each keyword $w_j$, $\mathcal{O}$ computes:
\(l_j = F(k_1, \bigoplus_{a_t\in\mathsf{A}_j} a_t),\quad
h_j = F(k_1,l_j),\quad
v_j = F(k_1,h_j),\)
where $F$ is a PRF. The key $v_j$ encrypts the search index associated with keyword $w_j$.

Each keyword receives $\alpha$ encrypted dummy entries to hide its real document frequency.

\begin{itemize}
    \item $\mathsf{Tset}\, \mathsf{T}:$ Following the \textsf{OXT} framework, the inverted index $\mathsf{T}$ links each keyword to the identifiers of encrypted documents. Each keyword $w_j$ is transformed into a search token 
    \(\mathsf{stag}_j \leftarrow F(k_T, w_j)\), where $F$ is a PRF keyed by $k_T$. 
    The search token is then indexed as
    \(l_j \leftarrow H(\mathsf{stag}_j \Vert c), \quad c \in [1, |\mathsf{DB}(w_j)|+\alpha]\), where $|\mathsf{DB}(w_j)|$ is the number of documents containing $w_j$, and $\alpha$ denotes additional dummy entries.    
    Each entry stored in $\mathsf{T}[l_j]$ is a pair $(e_{i,j}, y_{i,j})$, computed as:
    \(
    e_{i,j} = \textsf{Sym.Enc}(v_j, ind_i) \quad \text{and} \quad
    y_{i,j} = \mathsf{x}_i \cdot z_j^{-1},
    \)
    where
    \(\mathsf{x}_i = F_p(k_2, ind_i) \quad \text{and} \quad z_j = F_p(k_z, w_j).\) Additionally, for each keyword \(w_j\), \(\alpha\) encrypted dummy entries are inserted into $\mathsf{Tset}$ to simplify future updates while preserving the functional correctness of the scheme.

  \item $\mathsf{Xset} \, \mathsf{X:}$ Following the construction of \textsf{OXT}~\cite{Cash14}, this structure stores cryptographic hashes of $(w_j, ind_i)$ pairs to enable efficient conjunctive queries. Each value, denoted by $\mathsf{xtag}_{i,j} \leftarrow g_1^{F_p(k_x, \, w_j) \cdot F_p(k_2, \, ind_i)} \in \mathbb{G}_1$, allows the server to verify, for any document retrieved, whether it contains the other keywords of the query, without revealing the content of the document.
    \item $\mathsf{Cset} \,\mathsf{C:}$ This structure enables the server to access the corresponding entries in $\mathsf{Tset}$ without revealing $k_T$ to the client. 
    Each entry is indexed by a witness \(\mathsf{wtag}_j = g_1^{F_p(v_j, w_j)} \in \mathbb{G}_1\), which serves as a cryptographic proof for the authorization of the query. 
    For storage, the witness is hashed: \(\mathsf{ctag}_j = H(\mathsf{wtag}_j),\)
    which is the value stored in $\mathsf{Cset}$. $\mathcal{P}$ uses
    \(\theta_j = \mathsf{stag}_j \oplus F(h_j, w_j)\), and \(\mathsf{C}[\mathsf{ctag}_j] = \theta_j,\)
    to retrieve the search token $\mathsf{stag}_j$ and access the corresponding entries in $\mathsf{Tset}$.   
\end{itemize}

\begin{algorithm}[t]
\caption{\(\mathsf{EDBSetup}_{\mathcal{O}}\)}
\label{alg:edbsetup}
\begin{algorithmic}[1]
  \Statex \textsf{input:}
 \(\mathsf{DB}\),  \(K = (k_1, k_2, k_T, k_x, k_z)\), \(\mathsf{A}\),  \(\{\mathsf{W}_t\}_{t=1}^{|\mathsf{A}|}\), \(\{\mathsf{A}_j\}_{j=1}^D\)
  \Statex \textsf{output:}
\(\mathsf{EDB} = (\mathsf{T}, \mathsf{X}, \mathsf{C})\), \(\mathsf{S}_h\), \(\mathsf{S}_v\),  \(\mathsf{Count}\)
  \State \(\mathsf{T} \gets \emptyset,\ \mathsf{C} \gets \emptyset,\ \mathsf{X} \gets \emptyset,\, \ \mathsf{S}_h \gets \emptyset, \, \mathsf{S}_v \gets \emptyset, \, \mathsf{Count} \gets \emptyset \) 
\For{each \(w_j \in \mathsf{W}_t\)}
     \State \( l_{j,t} \gets F(k_1, \bigoplus_{t \in \mathsf{A}_j} a_t) \) \State $h_j \gets F(k_1, l_{j,t})$, \quad $\mathsf{S}_h \gets \mathsf{S}_h \cup \{(w_j, h_j)\}$
    \State $v_j \gets F(k_1, h_j)$, \quad $\mathsf{S}_v \gets \mathsf{S}_v \cup \{(w_j, v_j)\}$

    \State \(\mathsf{wtag}_j \leftarrow g_1^{F_p(v_j, w_j)}\), \, \(\mathsf{ctag}_j \gets H(\mathsf{wtag}_j)\)
    \State \(\mathsf{stag}_j \gets F(k_T, w_j )\), \(\theta_j \gets \mathsf{stag}_j \oplus F(h_j, w_j)\) 
    \State\(C[\mathsf{ctag}_j] \gets \theta_j\),  \(c \gets 1\)
   \For{$i = 1$ \text{ to } $|\mathsf{DB}(w_j)| + \alpha$}
      \State \(\mathsf{x}_i \gets F_p(k_2, ind_i)\), \, \(z_j \gets F_p(k_z, w_j)\) 
      \State \(\mathsf{xtrap}_j \gets F_p(k_x, w_j)\), \, \(y_{i,j} \gets \mathsf{x}_{\mathrm{i}} \cdot z_j^{-1}\) 
      \State \(e_{i,j} \gets \textsf{Sym.Enc}(v_j, ind_i)\), \quad \(l_j \gets H(\mathsf{stag}_j \,\|\, c)\)
      \State \(\mathsf{T}[l_j] \gets (e_{i, j}, y_{i, j})\), \, \(c \gets c + 1\), \quad \(\mathsf{Count}_j \gets c\)
      \State $\mathsf{Count} \gets \mathsf{Count} \cup \{(w_j, \mathsf{Count}_j)\}$

    \State $\mathsf{xtag}_{i,j} \gets g_1^{\,\mathsf{xtrap}_j \cdot \mathsf{x}_i}$; append $\mathsf{xtag}_{i,j}$ to $\mathsf{X}$

    \EndFor
  \EndFor
  \State \Return \((\mathsf{EDB}, \, \, \mathsf{S}_h, \,  \mathsf{S}_v, \mathsf{Count})\)
\end{algorithmic}
\end{algorithm}
\paragraph{\((\mathsf{D}_\mathcal{C},\, \mathsf{keys}_\mathcal{C}) \leftarrow \mathsf{ClientRegistration}_\mathcal{O}\Big( \mathsf{A}_\mathcal{C},\,\{\mathsf{W}_t\}_{t=1}^{|\mathsf{A}_\mathcal{C}|},\,pk_\mathcal{C},\,\mathsf{S}_h,\\,\mathsf{S}_v,\ pk_\mathcal{O},\,sk_\mathcal{O},\,k_x,\,k_z\Big)\).}
Algorithm~\ref{alg:clientRegistration}, run by $\mathcal{O}$ for each client~$\mathcal{C}$, takes as input the client's attribute set $\mathsf{A}_\mathcal{C}$, the associated keyword groups $\{\mathsf{W}_t\}_{t=1}^{|\mathsf{A}_\mathcal{C}|}$, the public key $pk_\mathcal{C}$, and the tables $\mathsf{S}_h = \{(w_j,h_j)\}$ and $\mathsf{S}_v = \{(w_j,v_j)\}$ generated by Algorithm~\ref{alg:edbsetup}.  
These tables contain only the entries corresponding to keywords authorized for ~$\mathcal{C}$.
Each attribute $a_t \in \mathsf{A}_\mathcal{C}$ maps to a subset $\mathsf{W}_t$ of keywords. To construct the authorized keyword set $\overline{W}_\mathcal{C}$, $\mathcal{O}$ aggregates all keywords from these subsets, ensuring uniqueness even in the presence of overlapping attributes.
The values $h_j$ and $v_j$ stored in $\mathsf{S}_h$ and $\mathsf{S}_v$ support token derivation, while the master keys $(k_x,k_z)$ are used during registration.  
$\mathcal{O}$ generates a signature $\sigma_\mathcal{C}$, using his secret key $sk_\mathcal{O}$, attesting the client’s authorization.

Algorithm~\ref{alg:clientRegistration} outputs a dictionary $\mathsf{D}_\mathcal{C}$ and secret key material $\mathsf{keys}_\mathcal{C}$.  
$\mathsf{D}_\mathcal{C}$ contains:
(i) the signature~$\sigma_\mathcal{C}$, and  
(ii) a compact structure $\mathsf{Ctoken}_\mathcal{C}$ that enables $\mathcal{C}$ to derive search tokens for all keywords in~$\overline{W}_\mathcal{C}$.
The signature $\sigma_\mathcal{C}$ is constructed using a cryptographic accumulator, which efficiently represents a set and supports membership verification without revealing individual elements~\cite{acc, acc1}.

For each $\mathcal{C}$, the accumulator aggregates PRF-derived values \(F_p(v_j, w_j)\) for each keyword \(w_j \in \overline{W}_\mathcal{C}\), resulting in
\(\mathsf{Acc}_\mathcal{C} = pk_\mathcal{C}^{\gamma} \in \mathbb{G}_1,
\quad \text{where} \quad
\gamma = \sum_{w_j \in \overline{W}_\mathcal{C}} F_p(v_j, w_j) \in \mathbb{Z}_p^*.\)
$\mathcal{O}$ signs the accumulator as \(\sigma_\mathcal{C} = \mathsf{Acc}_\mathcal{C}^{sk_\mathcal{O}}\).  
During the search phase, $\mathcal{C}$ generates a witness 
$\mathsf{wtag}_j = g_1^{F_p(v_j, w_j)}$ for the least frequent keyword, enabling the server 
$\mathcal{P}$ to verify its authorization and retrieve the corresponding entries in 
$\mathsf{Tset} \, \mathsf{T}$. The remaining keywords in the conjunctive query are then checked using hash-based lookups in $\mathsf{Xset} \, \mathsf{X}$.

The $\mathsf{Ctoken}_\mathcal{C}$ structure, which stores the hashed representations of the client’s authorized keywords, is subsequently used to verify that each remaining keyword belongs to the client’s authorized set. This process ensures query integrity and prevents privilege escalation or unauthorized keyword composition among clients.

Finally, the dictionary \(\mathsf{D}_\mathcal{C}\), indexed by \(pk_\mathcal{C}\), is transmitted to the server \(\mathcal{P}\).   Each registered client, via its attribute set $A_\mathcal{C}$, receives its authorized keyword subset $\overline{W}_\mathcal{C}$ and secret keys $\mathsf{keys}_\mathcal{C}$ for generating valid search tokens.

\begin{algorithm}[h]
\caption{\(\mathsf{ClientRegistration}_{\mathcal{O}}\)}
\label{alg:clientRegistration}
\begin{algorithmic}[1]
  \Statex \textsf{input:} \(\mathsf{A}_\mathcal{C}, \{\mathsf{W}_t\}_{t=1}^{|\mathsf{A}_\mathcal{C}|}, \mathsf{S}_h, \mathsf{S}_v, pk_\mathcal{C}, pk_\mathcal{O}, sk_\mathcal{O}, k_x, k_z\)
  \Statex \textsf{output:} \(\mathsf{D}_\mathcal{C}, \mathsf{keys}_\mathcal{C}\)

  \State \(\mathsf{D}_\mathcal{C} \gets \emptyset\), \quad \(\gamma \gets 0\), \quad \(\mathsf{Ctoken}_\mathcal{C} \gets \emptyset\), \quad \(\overline{W}_\mathcal{C} \gets \emptyset\)

  \For{each \(a_t \in \mathsf{A}_\mathcal{C}\)}
      \For{each \(w_j \in \mathsf{W}_t\)}
          \If{\(w_j \notin \overline{W}_\mathcal{C}\)}
              \State \(\overline{W}_\mathcal{C} \gets \overline{W}_\mathcal{C} \cup \{w_j\}\)
          \EndIf
      \EndFor
  \EndFor

  \For{each \(w_j \in \overline{W}_\mathcal{C}\)}
      \State \(h_j \gets \textsf{parse}(\mathsf{S}_h)\), \, \(v_j \gets \textsf{parse}(\mathsf{S}_v)\)
      \State \(\gamma \gets \gamma + F_p(v_j, w_j)\), \, \(\mathsf{xtrap}_j \gets F_p(k_x, w_j)\)
      \For{each \(ind_i \in \mathsf{DB}(w_j)\)}
          \State \(\mathsf{x}_i \gets F_p(k_2, ind_i)\)
    \State append \(H(g_1^{\mathsf{xtrap}_j \cdot \mathsf{x}_i})\) to \(\mathsf{Ctoken}_\mathcal{C}\)

      \EndFor
  \EndFor

  \State \(\mathsf{Acc}_\mathcal{C} \gets (pk_\mathcal{C})^{\gamma}\), \(\sigma_\mathcal{C} \gets (\mathsf{Acc}_\mathcal{C})^{sk_\mathcal{O}}\)
  \State \(\mathsf{D}_\mathcal{C} \gets \{\sigma_\mathcal{C}, \mathsf{Ctoken}_\mathcal{C}\}\)
  \State \Return \(\mathsf{D}_\mathcal{C}, \mathsf{keys}_\mathcal{C} = \{k_x, k_z, (w_j, h_j, v_j)_{w_j\in\overline{W}_\mathcal{C}}, pk_\mathcal{O}\}\)
\end{algorithmic}
\end{algorithm}
\textbf{$\textsf{Tk}_q \gets \mathsf{TokenGen}_\mathcal{C}(q, \overline{W}_\mathcal{C}, \mathsf{keys}_\mathcal{C}, pk_\mathcal{C}, sk_\mathcal{C})$.}
Algorithm~\ref{alg:TokenGen}, 
run by the authorized client, generates a search token 
$\textsf{Tk}_q$ for a conjunctive query $q=\{w_1,\dots,w_n\}$.
$\mathcal{C}$ uses its key pair $(sk_\mathcal{C},pk_\mathcal{C})$, the set of authorized keywords
$\overline{W}_\mathcal{C}$, and the secret material
\(\mathsf{keys}_\mathcal{C}=\{k_x,k_z,(w_j,h_j,v_j)_{w_j\in\overline{W}_\mathcal{C}},pk_\mathcal{O}\}.\)
To optimize query execution and prevent illegitimate requests, $\mathcal{C}$ selects the least-frequent keyword $w_1$ as pivot and computes the witness \(\mathsf{wtag}_1 = g_1^{F_p(v_1,w_1)} ,\)
from which the server derives $\mathsf{ctag}_1$ and $\mathsf{stag}_1$ to locate matching entries in $\mathsf{Tset}$.
The remaining keywords are aggregated such that:
\(A_{\mathsf{xtoken}} = g_1^{\sum_{w_j \in \overline{W}_\mathcal{C} \setminus \{w_1\}} F_p(v_j,w_j)} ,\)
and for each $w_j$ ($j\ge2$) the client computes an extended token as follows: \(\mathsf{xtoken}[j] = g_1^{F_p(k_x,w_j)\cdot F_p(k_z,w_1)} ,\) which allows the server to check that any document containing the pivot also contains $w_j$. Since the client provides $\mathsf{wtag}_1$ and $A_{\mathsf{xtoken}}$, the server can verify that the witness $\mathsf{wtag}_1$ corresponds to an authorized keyword, analogous to accumulator-based membership verification.
The final token is
\(\textsf{Tk}_q =
\{pk_\mathcal{C},\mathsf{wtag}_1,\ A_{\mathsf{xtoken}},
pk_\mathcal{O}^{sk_\mathcal{C}}, F(h_1,w_1),\mathsf{xtoken}[j]\}_{j=2}^{n} \}.\)
It is transmitted to the cloud server, which processes the query using 
Algorithm~\ref{alg:search}.
\begin{algorithm}[h]
\caption{\(\mathsf{TokenGen}_{\mathcal{C}}\), ($w_1$ is the least frequent keyword)}
\label{alg:TokenGen}
\begin{algorithmic}[1]
    \Statex \textsf{input:}  \(q = \{w_1, \dots, w_n\} \subseteq \overline{W}_\mathcal{C}\), \(\mathsf{keys}_{c} = \{k_x, k_z, (w_j, h_j, v_j)_{w_j \in \overline{W}_\mathcal{C}}, pk_\mathcal{O}\} \) ,    \; \(pk_\mathcal{C}\), \, \(sk_\mathcal{C}\)
     \Statex \textsf{output:} \( \textsf{Tk}_{q} \)
     \State\( \mathsf{wtag}_1 \gets g_1^{F_p(v_1, \, w_1)} \), \; \(F\Bigl(h_1, \, w_1\Bigr)\) \;  
    \State\(A_{\mathsf{xtoken}} \; \gets \; g_1^{\sum_{w_j \in \overline{W}_\mathcal{C} \setminus \{w_1\}} F_p\bigl(v_j, \, w_j\bigr)}\)
        \For{\( j \gets 2 \) \textsf{to} \( n \)}
            \State \( \textsf{xtoken}[j] \gets g_1^{\; {F_p(k_{x}, \, w_j)} \cdot F_p(k_z, \, w_1 )} \) 
        \EndFor
\State
\(
\mathsf{Tk}_q \gets
\begin{aligned}[t]
\Bigl\{ & pk_{\mathcal C},\,
\mathsf{wtag}_1,\,
A_{\mathsf{xtoken}}, \\
& pk_{\mathcal O}^{sk_{\mathcal C}},\,
F(h_1,w_1),\,
\{\mathsf{xtoken}[j]\}_{j=2}^{n}
\Bigr\}
\end{aligned}
\)
    \State \Return \( \textsf{Tk}_q \)
\end{algorithmic}
\end{algorithm}\\

\textbf{\(\textsf{R} \leftarrow \mathsf{Search}_\mathcal{P}(\mathsf{EDB}, \mathsf{D}_{c}, \textsf{Tk}_q)\).}  
Algorithm~\ref{alg:search}, run by $\mathcal{P}$, verifies the client’s token $\textsf{Tk}_q$ and returns the corresponding encrypted identifiers.  
It takes  as input the encrypted database $\mathsf{EDB}$, the client’s dictionary $\mathsf{D}_\mathcal{C}$, and $\textsf{Tk}_q$.  
First, $\mathcal{P}$ verifies the client’s authentication by checking the signature:
\(\hat{e}(\sigma_\mathcal{C}, g_2) \stackrel{?}{=} \hat{e}(A_{\mathsf{xtoken}} \cdot \mathsf{wtag}_1, pk_{\mathcal{O}}^{sk_\mathcal{C}}),\) if verification fails, the query is rejected. 
Then, $\mathcal{P}$ computes
\(\mathsf{stag}_1 = \theta_1 \oplus F(h_1, w_1)\), with \(\theta_1 \gets \mathsf{Cset}[\mathsf{ctag}_1]\), and retrieves the matching entries from $\mathsf{Tset}$.
For each pair \((e_{i,j}, y_{i,j})\) and each $\mathsf{xtoken}[j]$, it checks:
\(
\forall j \ge 2, \quad \mathsf{xtoken}[j]^{y_{i,j}} \in X \quad\text{and}\quad H(\mathsf{xtoken}[j]^{y_{i,j}}) \in \mathsf{Ctoken}_\mathcal{C}.\)  
If both hold, $e_{i,j}$ is added to the result set $\textsf{R}$, which the client decrypts using $v_1$.  
This ensures that even colluding clients cannot escalate privileges, as each step enforces strict cryptographic verification.

\begin{algorithm}[h]
\caption{\(\mathsf{Search}_{\mathcal{P}}\)}
\label{alg:search}
\begin{algorithmic}[1]
 \Statex \textsf{inputs:} \(\mathsf{EDB}\), \(\mathsf{D}_\mathcal{C}\), \(\textsf{Tk}_{q}\)
 \Statex \textsf{Output:} \(\textsf{R}\)
\State \(\textsf{R} \gets \emptyset\)
\If{\(\hat{e}\Bigl(\sigma_\mathcal{C}, g_2\Bigr) \neq
 \hat{e}\Bigl(A_{\mathsf{xtoken}} \cdot \mathsf{wtag}_1, pk_{\mathcal{O}}^{sk_\mathcal{C}}\Bigr)\)}
 \State \textsf{Reject:} Access denied
    \State \Return \(\emptyset\)
\EndIf
\State \(\mathsf{ctag}_1 \gets H(\mathsf{wtag}_1)\)
\State \(\theta_1 \gets C[\mathsf{ctag}_1]\), \quad \(\mathsf{stag}_1 \gets \theta_1 \oplus F\Bigl(h_1, w_1\Bigr)\), \(c \gets 1\)
\While{\textsf{true}}
    \State \(l_1 \gets H\Bigl(\mathsf{stag}_1 \| c\Bigr)\)
    \If{\(T[l_1] = \text{null}\)}
        \State \textsf{break}
    \EndIf
    \State \((e_{i, j}, y_{i, j}) \gets T[l_1]\)
    \For{\(j \gets 2\) \textsf{to} \(n\)}
        \If{\(\mathsf{xtoken}[j]^{y_{i, j}} \in X\) \textsf{and} \(H\Bigl(\mathsf{xtoken}[j]^{y_{i, j}}\Bigr) \in \mathsf{Ctoken}_\mathcal{C}\)}
            \State \(\textsf{R} \gets \textsf{R} \cup \{e_{i, j}\}\)
        \EndIf
    \EndFor
    \State \(c \gets c + 1\)
\EndWhile
\State \Return \(\textsf{R}\)
\end{algorithmic}
\end{algorithm}
\textbf{Update Phase.}
This protocol run by $\mathcal{O}$ and $\mathcal{P}$, enables secure and efficient dynamic updates. 
During index construction, $\mathcal{O}$ inserted $\alpha$ dummy documents per keyword in $\mathsf{Tset}$, 
allowing up to $\alpha$ updates without rebuilding the index. 
Let $op \in \{\mathsf{add}, \mathsf{del}\}$ denotes an update operation for keyword $w_j$ and document $ind_i$. 
The procedure depends on the operation type:

\begin{itemize}
    \item \(op = \mathsf{add}\):  
    $\mathcal{O}$ sets the update counter \(c \gets |\mathsf{DB}(w_j)| - (\alpha - 1)\), corresponding to the first dummy slot reserved for future updates, and computes the index label  
    \(l_j \gets H(\mathsf{stag}_j \Vert c)\), where \(\mathsf{stag}_j = F(k_T, w_j)\).
    For subsequent updates of the same keyword \(w_j\), \(c\) is decremented by 1 to use the next reserved dummy slot.
    Than, $\mathcal{O}$ derives the cryptographic values:   
    \(x_i \gets F_p(k_2, ind_i)\), \(xtrap_j \gets F_p(k_x, w_j)\), and \(z_j \gets F_p(k_z, w_j)\),  
    generates the pair \(\mathsf{Tset}\)
   \(y_{i,j} \gets x_i \cdot z_j^{-1}\), \(e_{i,j} \gets \textsf{Sym.Enc}(v_j, ind_i)\),  
    and computes the corresponding \(\mathsf{xtag}\)  
    \(\mathsf{xtag}_{i,j} \gets g_1^{\,xtrap_j \cdot x_i}\).
    $\mathcal{O}$ sends the tuple  
    \((\mathsf{op}, l_j, (y_{i,j}, e_{i,j}), \mathsf{xtag}_{i,j})\)  
    to $\mathcal{P}$, which replaces the first dummy entry in \(\mathsf{Tset}[l_j]\) with \((y_{i,j}, e_{i,j})\) and inserts \(\mathsf{xtag}_{i,j}\) into \(\mathsf{Xset}\).
    \item \(op = \mathsf{del}\): $\mathcal{O}$ retrieves the location  \(l_j \gets H(\mathsf{stag}_j \Vert c)\) and computes the corresponding \(\mathsf{xtag}_{i, j}\) for the document to be deleted. Then, $\mathcal{O}$ sends \((l_j, \mathsf{xtag})\) to \(\mathcal{P}\), who removes the pair \((e_{i, j},y_{i, j})\) at 
    position \(l_j\) in \(\mathsf{Tset}\) and removes the associated \(\mathsf{xtag}_{i, j}\) 
    from \(\mathsf{Xset}\). 
\end{itemize}
This mechanism enables constant-time updates by using pre-inserted dummy entries, while preserving the encrypted index structure.

\textbf{Revocation.}
To revoke a client $\mathcal{C}$, $\mathcal{O}$ sends $pk_\mathcal{C}$ to $\mathcal{P}$, which deletes the corresponding authorization dictionary $\mathsf{D}_\mathcal{C}$. 
For granular keyword revocation, $\mathcal{O}$ computes \(\gamma' = \gamma - F_p(v_j, w_j)\)
and updates the signature as
\(\sigma'_\mathcal{C} = pk_\mathcal{C}^{\gamma'\cdot^sk_\mathcal{O}}.\)
The corresponding entries are removed from $\mathsf{Ctoken}_\mathcal{C}$. 
Subsequent search tokens from this client fail verification. Revocation does not require re-encryption of the database nor rebuilding the index.

\section{Security Analysis}
\label{sec: security-analysis}
This section discusses the security properties of \textsf{MASSE} based on the cryptographic assumptions presented in Definitions~\ref{app:crypto2}, \ref{app:crypto3} and \ref{app:crypto5} in Appendix~\ref{app:crypto}.

\subsection{Indistinguishability of Data and Queries\label{proof:1}}
To analyze the security of \textsf{MASSE} against an adversarial server, we formalize the leakage function \(\mathcal{L}\) of our scheme. Part of the leakage is inherited from the \textsf{OXT} protocol ~\cite{Cash14}, while additional leakage arises from the specific access control mechanism introduced in \textsf{MASSE}.
Let \(q = (s, x, \mathsf{CD}_c)\) be a sequence of \(Q\) non-adaptive 2-conjunctive queries, where each query is written as:
\(q(t) = \bigl(s(t),\, x(t),\, \mathsf{CD}_c(t)\bigr), \quad t \in [Q].\)
Here, \(s(t)\) denotes the least frequent keyword (s-term), \(x(t)\) is the second keyword (x-term), and \(\mathsf{CD}_c(t)\) is the client dictionary \(\mathcal{C}\) used in a query \(t\). For clarity, we focus on the 2-conjunctive case, but the formulation naturally generalizes to \(n\)-conjunctive queries. Let \(m\) be the number of distinct keywords in the database.
The leakage function \(\mathcal{L} = (N, \mathsf{N}_c, \mathsf{N}_{c,j}, \hat{s}, \mathsf{SP}, \mathsf{RP}, \mathsf{IP})\), which is exposed to the server, consists of the following components:
\begin{itemize}
    \item \textsf{Total keyword occurrences \(N\):} the total number of keywords appearances in the database, \(N = \sum_{j=1}^D |W_j|\).
    
    \item \textsf{Client dictionary size \(\mathsf{N}_c\):} for the \(c\)-th client, the number of keywords in their dictionary \(\mathsf{CD}_c\), is defined as:
    \(\mathsf{N}_c = |\mathsf{CD}_c|.\)
    
    \item \textsf{Per-keyword access leakage \(\mathsf{N}_{c,j}\):} a binary value indicating whether the keyword \(W_j\) belongs to the dictionary of the \(c\)-th client:
    \(\mathsf{N}_{c,j} =
    \begin{cases}
    1 & \text{if } W_j \in \mathsf{CD}_c \\
    0 & \text{otherwise}
    \end{cases}\)
    
    \item \textsf{Equality pattern \(\hat{s} \in (m)^Q\):} a vector indicating which queries share the same s-term. It assigns a unique integer to each distinct s-term in the order of appearance.
    
    \item \textsf{Size pattern \(\mathsf{SP} \in (d)^Q\):} the number of documents that match the first keyword in each query, \(\mathsf{SP}(t) = |\mathsf{DB}(s(t))|\).
    
    \item \textsf{Result pattern \(\mathsf{RP}\):} the set of documents that match both keywords in each query, \(\mathsf{RP}(t) = \mathsf{DB}(s(t)) \cap \mathsf{DB}(x(t))\).
    
     \item \textsf{IP} conditional intersection pattern.  For all $i \neq j$, with $t_i, t_j \in [Q]$ and $\beta_1, \beta_2 \in [n]$, where $n$ denotes the number of $x$-terms in a query, the pattern is defined as:
\noindent\makebox[\columnwidth][c]{%
\resizebox{0.92\columnwidth}{!}{$
\displaystyle
\mathsf{IP}[t_i,t_j;\beta_1,\beta_2]=
\begin{cases}
DB(s[t_i])\cap DB(s[t_j]), & x[t_i,\beta_1]=x[t_j,\beta_2],\\
\emptyset, & \text{otherwise.}
\end{cases}
$}}
   \item \(\mathsf{CD}_c(t)\) reveals the client identity for the \(t\)-th query through its public key \(pk_\mathcal{C}\) and associated authorization data

\end{itemize}
\begin{theorem}
\label{theo:indis}
Let $\mathcal{L}$ denote the leakage function defined above. Under the assumption of DDH in $\mathbb{G}_1$, with secure PRFs $F, F_p$, and an IND-CPA secure symmetric encryption scheme $\Sigma = (\mathsf{Sym.Enc}, \mathsf{Dec})$, the \textsf{MASSE} scheme is $\mathcal{L}$-semantically secure against non-adaptive adversaries. That is, for every PPT adversary $\mathcal{A}$, there exists a PPT simulator $\mathcal{S}$ such that
\[\left| \Pr[\mathsf{Real}_{\mathcal{A}}(\lambda) = 1] - \Pr[\mathsf{Ideal}_{\mathcal{A}, \mathcal{S}}(\lambda) = 1] \right| \leq \mathsf{negl}(\lambda)\]
\end{theorem}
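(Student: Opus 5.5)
The proof follows the hybrid template of the non-adaptive \textsf{OXT} security proof of Cash \etal~\cite{Cash14}. We start from the real game $G_0$, in which the simulator-free challenger runs $\mathsf{EDBSetup}_\mathcal{O}$, $\mathsf{ClientRegistration}_\mathcal{O}$ and $\mathsf{TokenGen}_\mathcal{C}$ on the adversary's database and query sequence $q=(s,x,\mathsf{CD}_c)$. We then move through a sequence of games $G_1,\dots,G_k$, each differing from the previous one by one cryptographic substitution, until the final game depends only on $\mathcal{L}=(N,\mathsf{N}_c,\mathsf{N}_{c,j},\hat s,\mathsf{SP},\mathsf{RP},\mathsf{IP})$ and the revealed client identities. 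That final game is the simulator $\mathcal{S}$. Summing the hybrid gaps gives the bound of Theorem~\ref{theo:indis}.

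The hybrids are as follows.
\begin{enumerate}
\item \textbf{PRF step.} Replace every PRF evaluation by a lazily sampled random function. This covers $F(k_1,\cdot)$, hence $l_j,h_j,v_j$; $F(k_T,\cdot)$, hence $\mathsf{stag}_j$; $F(h_j,\cdot)$; and $F_p(k_2,\cdot)$, $F_p(k_x,\cdot)$, $F_p(k_z,\cdot)$, $F_p(v_j,\cdot)$. This costs one PRF distinguishing advantage per key family; for the per-keyword keys $h_j,v_j$ we use a standard hybrid over the $D$ keywords.
\item \textbf{Hash outputs.} Treating $H$ as a random oracle, the labels $H(\mathsf{stag}_j\|c)$ and $\mathsf{ctag}_j$ become uniformly random.
\item \textbf{Ciphertext step.} Replace each $e_{i,j}=\mathsf{Sym.Enc}(v_j,ind_i)$ by an encryption of $0$. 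This is justified by IND-CPA security, again with a hybrid over keywords.
\item \textbf{Masking step.} Once $F(h_j,\cdot)$ is random, each $\theta_j=\mathsf{stag}_j\oplus F(h_j,w_j)$ is a one-time pad and becomes a uniform string. For queried s-terms, the pair $(\theta_1,F(h_1,w_1))$ is only constrained to XOR to a value $\mathsf{stag}$, which is consistent across queries with the same $\hat s$.
\item \textbf{Tset step.} The $y_{i,j}=\mathsf{x}_i z_j^{-1}$ values are uniform in $\mathbb{Z}_p^*$ and independent, as in \textsf{OXT}, because $z_j$ is random per keyword. We must flag that the paper drops the counter from $z_j$, so $y$-values for the same keyword are correlated through a common $z_j$. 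We handle this by keeping $z_j$ as a per-keyword secret in the simulation, so only $|\mathsf{Tset}|=N+D\alpha$ is leaked.
\end{enumerate}

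The main obstacle is the group-element part: the $\mathsf{xtag}$s, $\mathsf{xtoken}$s, $\mathsf{wtag}$, $A_{\mathsf{xtoken}}$, $\sigma_\mathcal{C}$, $pk_\mathcal{O}^{sk_\mathcal{C}}$ and $\mathsf{Ctoken}_\mathcal{C}$. Here we follow \textsf{OXT}'s DDH argument. In the game where $\mathsf{xtrap}_j$ and $\mathsf{x}_i$ are random exponents, the set $\{g_1^{\mathsf{xtrap}_j\mathsf{x}_i}\}$ for unqueried (keyword, document) pairs is indistinguishable from random group elements. This uses a reduction that embeds a DDH instance $(g_1^a,g_1^b,g_1^c)$ via random self-reducibility into all entries of a row, with the extended matrix-DDH variant. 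Similarly, $\mathsf{xtoken}[j]=g_1^{\mathsf{xtrap}_j z_1}$ is pseudorandom unless its value raised to some $y$ lands in $\mathsf{Xset}$, and those hits are exactly what $\mathsf{RP}$ and $\mathsf{IP}$ determine. The simulator therefore fills $\mathsf{Xset}$ with $N$ random elements. It then programs the elements hit by queries, using $\mathsf{RP}$ and $\mathsf{IP}$, so that $\mathsf{xtoken}^{y}$ matches for result documents.

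$\mathsf{Ctoken}_\mathcal{C}$ consists of $H$ applied to a subset of the same $\mathsf{xtag}$s. The simulator can produce it as hashes of the corresponding simulated $\mathsf{xtag}$s, and its size is determined by $\mathsf{N}_c$ and $\mathsf{N}_{c,j}$ together with the frequencies. The delicate point is that $\mathsf{N}_{c,j}$ must reveal enough to know which simulated entries to include. We take that as the intended meaning of the per-keyword access leakage.

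For the authorization values, the simulator samples $pk_\mathcal{C}=g_1^{r}$ itself, with $r$ playing the role of $sk_\mathcal{C}$, and samples $sk_\mathcal{O}$. It then chooses $\mathsf{wtag}_1$ as a random group element, with the same value for repeated s-terms per $\hat s$, and sets $A_{\mathsf{xtoken}}$ so that the pairing equation holds. Concretely, it picks a random $\gamma$, sets $\sigma_\mathcal{C}=pk_\mathcal{C}^{\gamma sk_\mathcal{O}}$, and $A_{\mathsf{xtoken}}=g_1^{\gamma}\cdot\mathsf{wtag}_1^{-1}$.

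Because each $F_p(v_j,w_j)$ is uniform after the PRF step, $\gamma$ is uniform, which justifies this choice. One caveat must be addressed: the paper's $\gamma$ sums over the whole $\overline W_\mathcal{C}$, including $w_1$, while the paper's $A_{\mathsf{xtoken}}$ sum excludes $w_1$, so verification is consistent. Distinct $\mathsf{wtag}$s for distinct s-terms then have the correct joint distribution, with $\mathsf{ctag}=H(\mathsf{wtag})$ programmed to point at the simulated $\theta$.

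Collecting all the game differences yields PRF, IND-CPA and DDH advantages plus a negligible collision term. This completes the plan. We expect the $\mathsf{Xset}$/$\mathsf{xtoken}$ DDH hybrid, together with the correct programming dictated by $\mathsf{IP}$ and $\mathsf{Ctoken}_\mathcal{C}$, to be the hardest and longest step.
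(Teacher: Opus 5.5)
Your hybrid template matches the paper's (PRF, IND-CPA, DDH on $\mathsf{Xset}$, then a leakage-driven simulator using $\mathsf{RP}$ and $\mathsf{IP}$). It breaks at your $\mathsf{Tset}$ step: you correctly spot the cause, but your fix does not work.

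Because $z_w=F_p(k_z,w)$ has no counter and $\mathsf{x}_i=F_p(k_2,ind_i)$ is shared by all keywords of a document, the stored values $y_{i,w}=\mathsf{x}_i z_w^{-1}$ satisfy $y_{i,w}\,y_{i',w'}=y_{i',w}\,y_{i,w'}$ whenever $ind_i,ind_{i'}\in\mathsf{DB}(w)\cap\mathsf{DB}(w')$. This identity cancels both $z_w$ and $\mathsf{x}_i$, so keeping $z_w$ secret buys nothing. By comparing pairwise products of entries, the server learns the co-occurrence structure of $\mathsf{DB}$ before any query. A simulator holding only $N$ does not know which slots share a keyword or a document.

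Queries make it worse. For an s-term $s$ with known entry $y_{i,s}$ and entries $e=(i,w)$, $e'=(i',w)$, one has $y_{e'}y_{i,s}/y_e=\mathsf{x}_{i'}z_s^{-1}$. Hence $\mathsf{xtoken}^{\,y_{e'}y_{i,s}/y_e}=g_1^{\mathsf{xtrap}_x\mathsf{x}_{i'}}$, and trying all polynomially many triples against $\mathsf{Xset}$ reveals whether documents \emph{outside} $\mathsf{DB}(s)$ contain $x$.

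Concretely, compare $\mathsf{DB}_0=\{ind_1:\{s,w\},\,ind_2:\{w,x\},\,ind_3:\{u\}\}$ and $\mathsf{DB}_1=\{ind_1:\{s,w\},\,ind_2:\{w\},\,ind_3:\{u,x\}\}$ under the single query $s\wedge x$. They have identical leakage, even including all keyword frequencies, yet the test hits only under $\mathsf{DB}_0$. So those $\mathsf{xtag}$s are computable from the view, and your DDH step fails too.

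For the stated $\mathcal{L}$ and the scheme as written, no simulator exists. You must either enlarge $\mathcal{L}$ to essentially the incidence structure of $\mathsf{DB}$, or restore OXT's blinding $z_{w,c}=F_p(k_z,w\|c)$ with per-counter tokens $\mathsf{xtoken}[c,j]$. The second option makes every $y$ independent and uniform so the OXT argument goes through, but it gives up the claimed $(n+1)$-exponentiation $\mathsf{TokenGen}$. The paper's Game $\mathsf{G}_5$ merely asserts these $y$ are uniform, so it does not close this gap either.

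Your simulation of the authorization values is also wrong as stated. Each query reveals $A_{\mathsf{xtoken}}\cdot\mathsf{wtag}_1=g_1^{\gamma_{\mathcal{C}}}$ with $\gamma_{\mathcal{C}}=\sum_{w_j\in\overline{W}_{\mathcal{C}}}F_p(v_j,w_j)$, a sum of per-keyword exponents shared by all clients. The real view therefore satisfies two kinds of relations:
\begin{itemize}
\item $g_1^{\gamma_{\mathcal{C}_3}}=g_1^{\gamma_{\mathcal{C}_1}}g_1^{\gamma_{\mathcal{C}_2}}$ when $\overline{W}_{\mathcal{C}_3}=\overline{W}_{\mathcal{C}_1}\sqcup\overline{W}_{\mathcal{C}_2}$;
\item $\mathsf{wtag}_1=g_1^{\gamma_{\mathcal{C}}-\gamma_{\mathcal{C}'}}$ when $\overline{W}_{\mathcal{C}}\setminus\overline{W}_{\mathcal{C}'}=\{s(t)\}$.
\end{itemize}
Sampling an independent random $\gamma$ per client violates both. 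A correct simulator would sample one exponent $u_j$ per keyword and set $\gamma_{\mathcal{C}}=\sum_{j:\mathsf{N}_{c,j}=1}u_j$. It would also program $\mathsf{ctag}_j=H(g_1^{u_j})$ and use $g_1^{u_j}$ as the witness of every query with s-term $w_j$. That requires the keyword index of each s-term, which $\hat s$ does not provide. Sizing $\mathsf{Ctoken}_{\mathcal{C}}$ likewise requires per-keyword frequencies. Both must be added to $\mathcal{L}$ explicitly rather than read into $\mathsf{N}_{c,j}$.
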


Before presenting the proof of this theorem, we first define the real and ideal worlds for the adversary, namely the server.
\begin{itemize}
    \item \textsf{Real Experiment} \label{world:reel} $\mathsf{Real}_{\mathcal{A}}(\lambda)$:  
    The adversary $\mathcal{A}$ chooses a database $\mathsf{DB}$ together with the associated attribute and keyword relations 
    $(\mathsf{A}, \{\mathsf{W}_t\}_{t=1}^{|\mathsf{A}|}, \{\mathsf{A}_j\}_{j=1}^{D})$. Here, $\mathsf{A}$ denotes the set of attributes, 
    while $\{\mathsf{W}_t\}$ and $\{\mathsf{A}_j\}$ define the mappings between attributes and keywords. 
    Next, $\mathcal{A}$ selects a non-adaptive sequence of $Q$ queries $q = (q(1), \ldots, q(Q))$. 
    The challenger then runs $\mathsf{Setup}$, $\mathsf{KeyGen}$, and $\mathsf{EDBSetup}$ to obtain 
    $(\mathsf{pp}, \mathsf{Keys}, \mathsf{EDB})$, and executes $\mathsf{ClientRegistration}$ for each client. 
    For each query $q(t)$, the challenger runs $\mathsf{TokenGen}$ and $\mathsf{Search}$, which output a pair $(\mathsf{Res}_t, \mathsf{ResInds}_t)$ consisting of encrypted results and the corresponding matching document identifiers. All observable outputs are collected into a transcript $T$.  
    Finally, $\mathcal{A}$ receives $(\mathsf{EDB}, T)$ and outputs a bit $b \in \{0,1\}$.

    \item \textsf{Ideal Experiment} \label{world:ideal} $\mathsf{Ideal}_{\mathcal{A}, \mathcal{S}}(\lambda)$:  
    The adversary $\mathcal{A}$ chooses a database $\mathsf{DB}$ with the corresponding attribute and keyword relations 
    $(\mathsf{A}, \{\mathsf{W}_t\}_{t=1}^{|\mathsf{A}|}, \{\mathsf{A}_j\}_{j=1}^{D})$. Here, $\mathsf{A}$ is the set of attributes, 
    and $\{\mathsf{W}_t\}$ and $\{\mathsf{A}_j\}$ represent the mappings between attributes and keywords. 
    It then selects a non-adaptive sequence of $Q$ queries $q = (q(1), \ldots, q(Q))$. 
    Given only the leakage $\mathcal{L}(\mathsf{DB}, q)$, a simulator $\mathcal{S}$ generates a simulated encrypted database 
    $\mathsf{EDB}$ and a simulated transcript $T$. These are provided to $\mathcal{A}$, who finally outputs a bit 
    $b \in \{0,1\}$ as its guess.
\end{itemize}

\begin{proof} To prove Theorem \ref{theo:indis}, we define a sequence of security games \(\textsf{G}_0, \dots, \textsf{G}_n\), each slightly modifying the previous. We argue that each pair \((\textsf{G}_i, \textsf{G}_{i+1})\) is computationally indistinguishable, thus, an adversary \(\mathcal{A}\) cannot distinguish the real from the ideal execution, proving security.
The hash function \(H\) is modeled as a random oracle maintaining a table \(\mathcal{M}\) mapping inputs to outputs. On a query $x \in \{0,1\}^*$, if $x \in \mathcal{M}$ the oracle returns $\mathcal{M}[x]$; otherwise, it samples
$y \xleftarrow{\$} \{0,1\}^\lambda$, stores $(x,y)$ in $\mathcal{M}$ and returns $y$.

\paragraph*{Game $\mathsf{G}_0$.}
Game~$\mathsf{G}_0$ models the real execution with minor simplifications.
On input $(\mathsf{DB}, q, K, \mathcal A, \{\mathsf W_t\}_{t=1}^{|\mathcal A|}, \{\mathsf A_j\}_{j=1}^D)$,
$\mathcal C$ initializes $\mathsf{EDB}$, constructs $\mathsf{XSet}$, and generates 
authorization dictionaries $\mathsf{D}_{\mathcal C}$ and secret keys $\mathsf{keys}_u$.
The view of $\mathcal{A}$ consists of the public parameters, the server-side structures
$\mathsf{Tset}$, $\mathsf{Xset}$, and $\mathsf{Cset}$, and the output of all its queries.
For each query $t \in [Q]$, the encrypted results $\mathsf{Res}$ and indices
$\mathsf{ResInds}$ are computed as
$\mathsf{ResInds} = \mathsf{DB}(s[t]) \cap \mathsf{DB}(x[t])$.
The only difference from real execution is the absence of false positives.
Assuming $F_p$ is a secure PRF, thus
\[
\bigl|\Pr[\mathsf{G}_0 = 1] - \Pr[\textsf{Real}^{\textsf{MASSE}}(\lambda) = 1]\bigr|
\leq \mathsf{negl}(\lambda)
\]
\paragraph*{Game $\mathsf{G}_1$.}  
In $\mathsf{G}_1$, all pseudo-random functions $F$ and $F_p$ used in the construction are replaced by truly random functions. 
Specifically, inputs previously evaluated with $F_p$ now use uniform random elements $f_2, f_j, f_x, f_z \in \mathbb{Z}_p^*$, and inputs previously evaluated with $F$ use uniform random functions over $\{0,1\}^\lambda$.  
All other operations are identical to $\mathsf{G}_0$.  
By a standard sequence-of-games argument, there exist adversaries $\mathcal{B}_{1,1}$ and $\mathcal{B}_{1,2}$ such that
\[
\big|\Pr[\mathsf{G}_0 = 1] - \Pr[\mathsf{G}_1 = 1]\big|
    \leq 5 \cdot \text{Adv}^{F}_{\mathcal{B}_{1,1}}(\lambda)
    + 4 \cdot \text{Adv}^{F_p}_{\mathcal{B}_{1,2}}(\lambda)\] 

\paragraph*{Game $\mathsf{G}_2$.}  
In $\mathsf{G}_2$, the adversary cannot distinguish a fake ciphertext from a real one, assuming the symmetric encryption scheme $\Sigma$ is IND-CPA secure.  
The $\mathsf{Tset}$ structure is modified so that each ciphertext $e_{i,j}$ encrypts a fixed string of zeros instead of the real index.  
All other operations remain as in $\mathsf{G}_1$.  
If an adversary $\mathcal{A}$ can distinguish $\mathsf{G}_2$ from $\mathsf{G}_1$, one can construct an adversary $\mathcal{B}_2$ that breaks the IND-CPA security of $\Sigma$. Thus
\[
\left|\Pr[\mathsf{G}_2 = 1] - \Pr[\mathsf{G}_1 = 1]\right|
    \leq m \cdot \text{Adv}^{\text{IND-CPA}}_{\Sigma,\mathcal{B}_2}(\lambda),\]
where $m$ is the number of symmetric keys used in $\mathsf{Tset}$.
\paragraph*{Game $\mathsf{G}_3$.} 
Game $\mathsf{G}_3$ transitions from pseudorandom functions to precomputed random arrays. During initialization, the game samples $H[w], Z[w] \xleftarrow{\$} \mathbb{Z}_p^*$ for each $w \in W$ and $I[ind] \xleftarrow{\$} \mathbb{Z}_p^*$ for each $ind$. These values are used to construct the protocol components as follows: 
(i) for each $w \in W$ and $ind \in DB(w)$, $\mathsf{xtag} = g_1^{H[w] \cdot I[ind]}$ is added to $\mathsf{Xset}$, and $\mathsf{Tset}$ stores $y = I[ind] \cdot Z[w]^{-1}$, (ii) for each query, the search tokens are set to $\mathsf{xtoken}[j] = g_1^{Z[w_1] \cdot H[w_j]}$, for  $j \ge 2$. 
Since the arrays $H, Z,$ and $I$ contain uniform elements and the algebraic mapping is preserved, the adversary's view is identical to $\mathsf{G}_2$ $\Pr[\mathsf{G}_3 = 1] = \Pr[\mathsf{G}_2 = 1]$
\paragraph*{Game $\mathsf{G}_4$.} 
In Game $\mathsf{G}_4$, we replace the $\mathsf{xtag}$ values in $\mathsf{Xset}$ with uniform random elements from $\mathbb{G}_1$. For each $(w,ind)$ with $w \in W$ and $ind \in DB(w)$, $\mathsf{xtag}$ is sampled as $g_1^r$ for $r \xleftarrow{\$} \mathbb{Z}_p^*$.
This transition is indistinguishable from $\mathsf{G}_3$ under the DDH assumption. Let $(g_1^a, g_1^b, g_1^c)$ be a DDH challenge. For each pair $(w,ind)$, the reduction algorithm $\mathcal{B}_4$ embeds the instance setting $H[w] = a$ and $I[ind] = b$, while sampling all $Z[w]$ uniformly at random. It programs $\mathsf{xtag} = g_1^c$ and constructs $\mathsf{xtoken}[i] = (g_1^a)^{Z[w_1]}$ together with the blinding value $y = b \cdot Z[w_1]^{-1}$ consistently with $\mathsf{G}_3$. During the search, the adversary computes \((\mathsf{xtoken}[i])^y = ((g_1^a)^{Z[w_1]})^{b \cdot Z[w_1]^{-1}} = g_1^{ab}.\)
In $\mathsf{G}_3$, this value is compared with $\mathsf{xtag} = g_1^c$ where $c = ab$, whereas in $\mathsf{G}_4$ it is compared with $g_1^r$. Hence, any advantage in distinguishing the two games yields a DDH distinguisher
\[
\left| \Pr[\mathsf{G}_4 = 1] - \Pr[\mathsf{G}_3 = 1] \right|
\leq \mathrm{Adv}^{\mathrm{DDH}}_{\mathbb{G}_1,\mathcal{B}_4}(\lambda)\]
\paragraph*{Game $\mathsf{G}_5$.}
Game~$\mathsf{G}_5$ modifies the random-oracle and array-access behavior. The simulator is based only on the allowed leakage.  
The oracle $\mathcal{M}$ is honestly evaluated only for keywords that appear as $s$-terms of queries. Otherwise, it outputs uniform values.  Array accesses are never performed in isolation. The $\mathsf{Tset}$ value $y$ requires simultaneous access to $I[ind]$ and $Z[w]$. This occurs in two cases.  
(i) For some $t \in [Q]$ and $\beta_1 \in [n]$ ($n$ denotes the number of $x$-terms in a query), the keyword $w$ is the $s$-term of query $t$.  

The file $ind$ contains both $w$ and the associated x-term $x_{\beta_1}(t)$.  
(ii) Two distinct queries $t_1, t_2$ have different $s$-terms that share an x-term.  For some $\beta_1, \beta_2 \in [n]$, the file $ind$ contains both $s$-terms.  
In all other cases, $y$ is sampled uniformly in $\mathbb{G}_1$.  
Similarly, an $\mathsf{xtag}$ in $\mathsf{Xset}$ requires simultaneous access to $H[w]$ and $I[ind]$.  
This happens only when $w$ is an x-term of query $t$ and $ind$ contains the corresponding $s$-term. Otherwise, it is sampled uniformly. These modifications ensure that the simulator remains consistent with the permitted leakage. They also preserve the game’s distribution.
Thus
\[\left|\Pr[\mathsf{G}_5 = 1] - \Pr[\mathsf{G}_4 = 1]\right| \leq \mathrm{negl}(\lambda)\]

\paragraph*{Simulator.}
For our analysis, we construct a simulator $\mathcal{S}$ that, given the leakage function $\mathcal{L}(DB,s,x) = (N, N_u, N_{u,j}, \overline{s}, SP, RP, IP)$, outputs a simulated encrypted database $\mathsf{EDB}$ and transcript $T$.  

By combining the transitions from $\mathsf{G}_0$ to $\mathsf{G}_5$,  we obtain that the joint distribution produced in Game~$\mathsf{G}_5$ is computationally indistinguishable from the real execution. It therefore suffices to show that the output $(\mathsf{EDB}, T)$ generated by $\mathcal{S}$ is identically distributed to that of Game~$\mathsf{G}_5$.

More precisely, upon receiving $\mathcal{L}(DB,s,x)$, the simulator generates
$\mathsf{EDB} = (\mathsf{Tset}, \mathsf{Cset}, \mathsf{Xset})$ together with a transcript $T$ whose distribution matches that of Game~$\mathsf{G}_5$.

The simulator uses the conditional intersection leakage $IP$ to derive a restricted equality patterns of the query $x$, denoted $\hat{x} \in [m]^{Q \times n}$.  
Intuitively, $\hat{x}$ specifies, for each query, which x-terms must be treated as equivalent.  

Formally, $\mathcal{S}$ defines a relation $\equiv$ over $[Q,n]$, where two positions $[t_1,\beta_1]$ and $[t_2,\beta_2]$ are related whenever
$\mathsf{IP}[t_1,t_2;\beta_1,\beta_2] \neq \emptyset$  

The transitive closure of this relation produces an equivalence relation, and each resulting equivalence class receives a unique identifier, $\hat{x}[t,\beta_1]$ is then defined as the index of the class containing $[t,\beta_2]$.
As in the \textsf{OXT} scheme, $\hat{x}$ fully captures the equality pattern of the query $x$.  In particular,

\(
\hat{x}[t_1,\beta_1] = \hat{x}[t_2,\beta_2]
\Rightarrow x[t_1,\beta_1] = x[t_2,\beta_2], \\
(x[t_1,\beta_1] = x[t_2,\beta_2]) \wedge
(DB(s[t_1]) \cap DB(s[t_2]) \neq \emptyset)
\Rightarrow \hat{x}[t_1,\beta_1] = \hat{x}[t_2,\beta_2].\)

In the simulation, the arrays $H$, $I$ and $Z$ are sampled uniformly at random, consistent with the allowed leakage. 

As a result, the adversary observes values that are identically distributed to those  in the real construction and obtains no information beyond the prescribed leakage.
Unlike Game~$\mathsf{G}_5$, where keyword and document identifiers are used explicitly, the simulator treats all keywords as indices in $[Q]$ and derives all arrays solely from leakage.
The simulated encrypted database
$\mathsf{EDB}$ and transcript $T$ are generated as follows.

For $\mathsf{Cset}$, the simulator creates an entry per keyword-related element and fills all remaining positions with uniform independent strings. Since Game~$\mathsf{G}_5$ follows the same distribution, thte resulting $\mathsf{Cset}$ is identically distributed.

For $\mathsf{Tset}$, the simulator follows the construction of
Game~$\mathsf{G}_5$, replacing identifiers not appearing in the equivalence relation with dummy symbols whose associated values are sampled uniformly from
$\mathbb{Z}_p$. This preserves the distribution of $\mathsf{Tset}$.

For $\mathsf{Xset}$, the simulator replaces each pair
$(\mathrm{ind}, x[t,\beta])$ with $(\mathrm{ind}, \hat{x}[t,\beta])$, using the same indices derived from $\mathsf{RP}$ or $\mathsf{IP}$. Two entries are equal if and only if their corresponding $\hat{x}$ values are equal, since in Game~$\mathsf{G}_5$ all group elements are sampled uniformly at random subject only to the equality constraints induced by the leakage. 
Consequently, the simulated $\mathsf{Xset}$ maintains the same equivalence structure and has the same distribution as in Game~$\mathsf{G}_5$.

The transcript $T$ produced by the simulator has the same distribution as the real execution, meaning that the difference in the adversary's probability of success is negligible
\[
\left| \Pr[\mathsf{Real}_{\mathcal{A}}(\lambda) = 1] - \Pr[\mathsf{Ideal}_{\mathcal{A}, \mathcal{S}}(\lambda) = 1] \right| \leq \mathsf{negl}(\lambda)
\]
\end{proof}

\subsection{Forward and Backward Privacy}
We prove that \textsf{MASSE} provides forward privacy and Type-II backward privacy against a semi-honest PPT cloud server.
\begin{corollary}[Forward and Type-II Backward Privacy of \textsf{MASSE}]
Assuming that $F$ and $F_p$ are secure pseudorandom functions and that the Decisional Diffie-Hellman (DDH) assumption holds in $\mathbb{G}_1$, the \textsf{MASSE} scheme achieves forward privacy and Type-II backward privacy.
\end{corollary}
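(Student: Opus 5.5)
We derive the corollary from Theorem~\ref{theo:indis} by extending its hybrid argument to adaptive sequences that interleave searches with $\mathsf{UpdateEDB}$ calls. We then check that the extended simulator needs only the forward- and Type-II backward-private leakage of Bost \etal~\cite{Bost17}. This leakage consists of the update operation type $op$ and its timestamp. For a search on $w$ it also includes the timestamps of all updates on $w$ and the result pattern $\mathsf{RP}$. We first fix a leakage $\mathcal{L}^{\mathsf{upd}}(op,w_j,ind_i)=op$ for updates. On a later search for $w$, we add to $\mathcal{L}$ the list of update timestamps on $w$, $\mathsf{TimeDB}(w)$, and $\mathsf{Updates}(w)$. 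Forward privacy then means that $\mathcal{L}^{\mathsf{upd}}$ contains no keyword-dependent value. Type-II backward privacy means the search leakage contains no more than these timestamps and the current result set.

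\textbf{Step 1: simulating updates.} We reuse Games $\mathsf{G}_0$--$\mathsf{G}_5$, with one additional hybrid after $\mathsf{G}_1$ that handles update messages. An $\mathsf{add}$ message $(op,l_j,(y_{i,j},e_{i,j}),\mathsf{xtag}_{i,j})$ is simulated as follows. The label $l_j=H(\mathsf{stag}_j\Vert c)$ is a fresh random-oracle output, because the pair $(\mathsf{stag}_j,c)$ was never queried before. The value $y_{i,j}=I[ind]\cdot Z[w]^{-1}$ is uniform in $\mathbb{Z}_p^*$, $e_{i,j}$ is an encryption of zeros as in $\mathsf{G}_2$, and $\mathsf{xtag}_{i,j}$ is a uniform element of $\mathbb{G}_1$ as in $\mathsf{G}_4$. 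The simulator therefore samples all four components at random. It records the message, and later programs the random oracle so that $H(\mathsf{stag}_w\Vert c)=l_j$ once a search for $w$ reveals the pivot. Because $\mathsf{stag}_w$ is a random-function output in $\mathsf{G}_1$, the server queries $H$ at $\mathsf{stag}_w\Vert c$ before that search only with negligible probability, of order $q_H/2^\lambda$. Hence the update transcript depends only on $op$, which gives forward privacy. Two points need care here. The slots are consumed in a fixed, keyword-independent order, namely the counter decrements within the dummy region. The replaced entries are dummies, so the server only sees one ciphertext overwritten by another, and these are indistinguishable by IND-CPA security.

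\textbf{Step 2: simulating searches and deletions.} A $\mathsf{del}$ message $(l_j,\mathsf{xtag})$ refers to a label and an $\mathsf{xtag}$ that the server has already seen. It therefore reveals which earlier position is removed, and this is exactly the timestamp and update-pairing information allowed at Type-II once $w$ is searched. Before any search, the simulator answers a deletion with a random previously emitted label. After a search on $w$, the simulator knows $\mathsf{TimeDB}(w)$ and $\mathsf{Updates}(w)$, so it can program $H$ so that the live labels lie on the $\mathsf{stag}$ chain and the deleted ones do not. It also programs $\mathsf{Xset}$ membership consistently with $\mathsf{RP}$ and $\mathsf{IP}$, exactly as in the proof of Theorem~\ref{theo:indis}. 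Since the $\mathsf{Cset}$ entry and $\theta_w$ are fixed at setup and never updated, a search reveals no more than in the static case.

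\textbf{Main obstacle.} The hardest step is to reconcile deletion with Type-II rather than Type-III leakage. A deletion message names the exact label written by an earlier addition, so the server can link the deletion to that addition before any search. We would argue that this linkage is still covered by the Type-II leakage: the server sees only random labels, and cannot tell which keyword or document they concern until the search on $w$, which reveals $\mathsf{Updates}(w)$. If that argument fails, we would restrict the claim and state the leakage honestly as Type-III. In either case, we would bound the total advantage by adding to the bound of Theorem~\ref{theo:indis} the IND-CPA term for update ciphertexts, a DDH term for update $\mathsf{xtag}$s, and the random-oracle collision term $q_H/2^\lambda$.
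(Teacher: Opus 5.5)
\textbf{Gap 1: forward privacy (Step 1).} Step 1 only handles additions that occur before every search on the affected keyword, and that is not the case forward privacy is about. In $\mathsf{Search}_{\mathcal{P}}$ the server itself computes $\mathsf{stag}_1=\theta_1\oplus F(h_1,w_1)$ in the clear. It then walks $H(\mathsf{stag}_1\Vert c)$ until it meets a null entry. The $\alpha$ dummy slots are already written into $\mathsf{Tset}$ at setup, so this walk visits the slots that a later $\mathsf{add}$ on $w_1$ will overwrite. In any case the server can evaluate $H(\mathsf{stag}_1\Vert c)$ for every $c$. The label $l_j$ of such an addition is therefore neither a fresh random-oracle point nor programmable. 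The server already knows it belongs to $w_1$, and a simulator given only $op$ cannot produce it.

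The $\mathsf{Xset}$ side leaks the same way, because in \textsf{MASSE} $\mathsf{xtoken}[j]=g_1^{F_p(k_x,w_j)\cdot F_p(k_z,w_1)}$ carries no counter. The server can raise a stored x-token to a newly inserted $y_{i,j}$, or compare a new $\mathsf{xtag}$ with stored values $\mathsf{xtoken}[j]^{y}$, and thereby run old queries on new data. Your claim that the update transcript depends only on $op$ is therefore false for the scheme as specified. This is a leak of the construction, not a weakness of your hybrid. Removing it would take fresh per-update labels and x-tokens, as in ODXT.

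\textbf{Gap 2: Type-II backward privacy (Step 2).} This is the obstacle you flag yourself, and it does fail. A $\mathsf{del}$ message $(l_j,\mathsf{xtag})$ names the exact label and $\mathsf{xtag}$ written by an earlier insertion (or at setup). So at deletion time the server learns which insertion is being cancelled. That is $\mathsf{DelHist}(w)$, the Type-III leakage in Bost et al.'s classification. Type-II leakage $\mathsf{Updates}(w)$ consists of timestamps only and never pairs deletions with insertions, so the randomness of the labels does not help, before or after $w$ is searched. The explicit $\mathsf{xtag}$ can moreover be matched against $\mathsf{xtoken}[j]^{y}$ values from earlier searches, revealing the deleted keyword. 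As written, your proposal establishes at most a Type-III statement and does not prove the corollary.

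\textbf{Comparison with the paper.} The paper's proof uses the same PRF, IND-CPA, DDH hybrid skeleton and then simply asserts that a simulator exists. It never treats an update issued after a search on the same keyword. Its backward-privacy simulator is explicitly handed $\mathsf{DelHist}(w)$, exactly the Type-III leakage your obstacle identifies. So the paper's argument does not close either gap; trying to build the simulator concretely is what exposes them.
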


\begin{proof}
We prove both properties using a common real/ideal indistinguishability framework with a sequence-of-games argument.

Before presenting the proof of this corollary, we define the leakage functions considered in the analysis.
Let $Q$ be a sequence of queries consisting of search queries $(t,w)$ issued with the timestamp $t$ for the keyword $w$, and update queries $(t, op,(w,ind))$ issued with the timestamp $t$, where $\mathsf{op} \in \{\mathsf{add},\mathsf{del}\}$.

\begin{itemize}

    \item The function $\mathsf{TimeDB}(w)$ returns the set of document identifiers associated with keyword $w$ that have not been deleted, together with their insertion timestamps:
    \(\mathsf{TimeDB}(w)
    =\{(t,ind) |
    (t,\mathsf{add},(w,ind)) \in Q \ \wedge\ \nexists\, t' :
    (t',\mathsf{del},(w,ind)) \in Q\}.\)
    
    \item For a conjunctive query $q = (w_1 \wedge \cdots \wedge w_n)$, we extend this definition as
    \(\mathsf{TimeDB}(q)
    =\{( t_1,\ldots,t_n\rangle,ind) \ |\
    \forall i \in [n],\
    (t_i,\mathsf{add},(w_i,ind)) \in Q \ \wedge\ (w_i,ind) \text{ is not deleted}.\)
    
    \item The function $\mathsf{Upd}(w)$ returns the set of timestamps in which the keyword $w$ is involved in an update operation:
    \[\mathsf{Upd}(w)
    =\bigl\{t \mid \exists\, \mathsf{op},ind :
    (t,\mathsf{op},(w,ind)) \in Q \bigr\}\]
    
    \item For a conjunctive query
    $q = (w_1 \wedge \cdots \wedge w_n)$, where $w_1$ denotes the s-term, the
    aggregate update leakage is defined as:
    \(\mathsf{Upd}(q)
    =\mathsf{Upd}(w_1)\ \cup\
    \bigcup_{i=2}^{n} \mathsf{Upd}(w_1,w_i)\)
    
    For instance, for a two-keyword query $q=(w_1 \wedge w_2)$, the corresponding x-term update leakage is
    \(\mathsf{Upd}(w_1,w_2)
    =\{(t_1,t_2) \ |\
    \exists\, ind,\mathsf{op} :
    (t_1,\mathsf{op},(w_1,ind)) \in Q \ \\ \land\ (t_2,\mathsf{op},(w_2,ind)) \in Q\}\)
    
    \item Let $\mathsf{DelHist}(w)$ denote the deletion history of keyword $w$,
    recording the insertion and deletion timestamps of documents containing $w$:
    \(\mathsf{DelHist}(w)
    =\{(t_{\mathsf{add}},t_{\mathsf{del}}) \ |\
    \exists\, ind :
    (t_{\mathsf{add}},\mathsf{add},(w,ind)) \in Q
    \ \land\
    (t_{\mathsf{del}},\mathsf{del},(w,ind)) \in Q\}\)
    
\end{itemize}

\paragraph*{Forward Privacy.}

We formalize forward privacy by comparing the server views generated by the update protocol in a real $\mathsf{Real}_{\mathcal{A}}(\lambda)$ execution and in an ideal simulation $\mathsf{Ideal}_{\mathcal{A},\mathcal{S}}(\lambda)$.
\begin{itemize}
    \item In $\mathsf{Real}_{\mathcal{A}}(\lambda)$ experiment, the challenger $\mathcal{C}$ initializes the system and constructs the encrypted database with $\alpha$ dummy slots per keyword in $\mathsf{Tset}$. $\mathcal{A}$ adaptively selects two update sequences of equal length, differing only in the identifiers and keywords of new documents. $\mathcal{C}$ executes one sequence honestly using the \textsf{Update} algorithm, overwriting the corresponding dummy slots at positions $l_j = H(\mathsf{stag}_j \| c)$, and returns the resulting server view. $\mathcal{A}$ outputs a bit $b \in \{0,1\}$ indicating which sequence was executed.

    \item In  $\mathsf{Ideal}_{\mathcal{A},\mathcal{S}}(\lambda)$ experiment, $\mathcal{A}$ issues the same update sequences. A simulator $\mathcal{S}$, given only the leakage functions $\mathsf{TimeDB}(w)$, $\mathsf{TimeDB}(q)$, $\mathsf{Upd}(w)$ and  $\mathsf{Upd}(q)$  (revealing the timing and number of updates but not the content), produces a simulated server view that is given to $\mathcal{A}$. The adversary outputs a bit $b \in \{0,1\}$ defined as in the real experiment.
\end{itemize}
We prove forward privacy for \textsf{MASSE} via a sequence of games.

\paragraph*{Game $\mathsf{G}_0$.}  
$\mathsf{G}_0$ corresponds to the real execution of \textsf{MASSE}, where $\mathcal{A}$ interacts with the system normally and observes search tokens, update tags, and encrypted indices.

\paragraph*{Game $\mathsf{G}_1$.}  
All outputs of the PRFs $F$ and $F_p$ are replaced with uniformly independent random values. In particular, the search tokens $\mathsf{stag}_j$, the update tags $\mathsf{xtag}_{i,j}$, and the auxiliary values are sampled uniformly at random.  
By PRF security, there exist adversaries $\mathcal{B}_{1}$ and $\mathcal{B}_{2}$ such that
\[
\big|\Pr[\mathsf{G}_0=1] - \Pr[\mathsf{G}_1)=1]\big|
\leq \mathrm{Adv}^{\mathsf{PRF}}_{F, \mathcal{B}_{1}}(\lambda)
+ 3\cdot\mathrm{Adv}^{\mathsf{PRF}}_{F_p,  \mathcal{B}_{2}}(\lambda)\]
\paragraph*{Game $\mathsf{G}_2$.}  
Each ciphertext $e_{i,j}$ in $\mathsf{Tset}$ is replaced by an encryption of a fixed zero string. By the IND-CPA security of $\Sigma$, there exists an adversary $\mathcal{B}_3$ such that
\[\big|\Pr[\mathsf{G}_1)=1] - \Pr[\mathsf{G}_2)=1]\big|
\leq \mathrm{Adv}^{\mathsf{IND\text{-}CPA}}_{\Sigma, \mathcal{B}_3}(\lambda)\]

\paragraph*{Game $\mathsf{G}_3$.}  
The structured exponentiation used to compute $\mathsf{xtag}_{i,j}$ is replaced by values drawn from a simulated DDH oracle in $\mathbb{G}_1$. Any distinguisher $\mathcal{A}$ between $\mathsf{G}_2$ and $\mathsf{G}_3$ can be used to construct a DDH adversary $\mathcal{B}_4$
\[
\big|\Pr[\mathsf{G}_2=1] - \Pr[\mathsf{G}_3=1]\big|
\leq \mathrm{Adv}^{\mathsf{DDH}}_{\mathbb{G}_1, \mathcal{B}_4}(\lambda)\]

\paragraph*{Simulation.}  
Using only the leakage functions $\mathsf{Upd}(w)$, $\mathsf{Upd}(q)$ and $\mathsf{TimeDB}(q)$, a simulator $\mathcal{S}$ can generate a server view identically distributed to $\mathsf{G}_3$. Therefore, the forward privacy advantage of any  $\mathcal{A} $ is \(\mathsf{Adv}^{\mathrm{Forward}}_{\textsf{MASSE}}(\mathcal{A})\) and is bounded by
\[\mathrm{Adv}^{\mathsf{PRF}}_{F, \mathcal{B}_{1}}(\lambda)
+
3\cdot \mathrm{Adv}^{\mathsf{PRF}}_{F_p, \mathcal{B}_{2}}(\lambda)
+
\mathrm{Adv}^{\mathsf{IND\text{-}CPA}}_{\Sigma, \mathcal{B}_3}(\lambda)
+
\mathrm{Adv}^{\mathsf{DDH}}_{\mathbb{G}_1, \mathcal{B}_4}(\lambda)\]

\paragraph*{Type-II Backward Privacy.}We establish Type-II backward privacy for \textsf{MASSE} by comparing the server views obtained after document deletions in the real execution and in an ideal simulation.
\begin{itemize}
    \item In $\mathsf{Real}_{\mathcal{A}}(\lambda)$ experiment, the challenger $\mathcal{C}$ initializes the system and grants the adversary adaptive access to valid document additions. $\mathcal{A}$ selects a previously added document $d^*$ and requests its deletion. $\mathcal{C}$ performs the deletion by overwriting the corresponding entry in $\mathsf{Tset}$ and removing the associated tag from $\mathsf{Xset}$, then returns the resulting server view to $\mathcal{A}$, who outputs a bit $b \in \{0,1\}$.

    \item In $\mathsf{Ideal}_{\mathcal{A},\mathcal{S}}(\lambda)$ experiment, a simulator $\mathcal{S}$, given only the leakage functions $\mathsf{TimeDB}(w)$, $\mathsf{Upd}(w)$ and $\mathsf{DelHist}(w)$, generates a simulated server view consistent with the leakage and provides it to $\mathcal{A}$, who outputs a bit $b \in \{0,1\}$.
\end{itemize}

We prove indistinguishability between the real and ideal deletion experiments
via a sequence of games.

\paragraph*{Game $\mathsf{G}_0$.}
Game $\mathsf{G}_0$ is identical to the real execution of the deletion protocol,
except that all evaluations of the PRFs $F$ and $F_p$ are replaced with uniformly
random functions.
Any distinguisher between the real experiment and $\mathsf{G}_0$ yields PRF
adversaries $\mathcal{B}_{0,1}$ and $\mathcal{B}_{0,2}$ such that
\[\big|\Pr[\mathsf{Real}_{\mathcal{A}}(\lambda)=1]
- \Pr[\mathsf{G}_0=1]\big|
\leq
\mathrm{Adv}^{\mathsf{PRF}}_{F, \mathcal{B}_{0,1}}(\lambda)
+
3\mathrm{Adv}^{\mathsf{PRF}}_{F_p, \mathcal{B}_{0,2}}(\lambda)\]

\paragraph*{Game $\mathsf{G}_1$.}
In $\mathsf{G}_1$, each ciphertext $e_{i,j}$ stored in $\mathsf{Tset}$ encrypts a
fixed zero string instead of the corresponding document identifier.
By the IND-CPA security of the symmetric encryption scheme $\Sigma$, there exists
an adversary $\mathcal{B}_1$ such that
\[
\big|\Pr[\mathsf{G}_0=1]
- \Pr[\mathsf{G}_1=1]\big|
\leq
\mathrm{Adv}^{\mathsf{IND\text{-}CPA}}_{\Sigma, \mathcal{B}_1}(\lambda)\]

\paragraph*{Game $\mathsf{G}_2$.}
In this game, the structured exponentiation used to generate update tags
$\mathsf{xtag}_{i,j}$ is replaced with values obtained from a simulated
Decisional Diffie--Hellman oracle in $\mathbb{G}_1$.
Any adversary distinguishing $\mathsf{G}_1$ from $\mathsf{G}_2$ can be transformed
into a DDH adversary $\mathcal{B}_2$, yielding
\[|\Pr[\mathsf{G}_1=1]
- \Pr[\mathsf{G}_2=1]|
\leq
\mathrm{Adv}^{\mathsf{DDH}}_{\mathbb{G}_1, \mathcal{B}_2}(\lambda)\]
\paragraph*{Simulation (Ideal Game).}
Using only leakage functions
$\mathsf{Upd}(w)$ and $\mathsf{DelHist}(w)$, a simulator $\mathcal{S}$ can generate random overwrites and tag removals whose
distribution is identical to that of $\mathsf{G}_2$ and independent of the target document $ind^*$.

Therefore,
\(
\mathsf{Adv}^{\mathrm{Backward\text{-}II}}_{\textsf{MASSE}}(\mathcal{A})
\leq
\mathrm{Adv}^{\mathsf{PRF}}_{F, \mathcal{B}_{0,1}}(\lambda)
+
3\cdot\mathrm{Adv}^{\mathsf{PRF}}_{F_p, \mathcal{B}_{0,2}}(\lambda)
+
\mathrm{Adv}^{\mathsf{IND\text{-}CPA}}_{\Sigma, \mathcal{B}_1}(\lambda)
+
\mathrm{Adv}^{\mathsf{DDH}}_{\mathbb{G}_1, \mathcal{B}_2}(\lambda)\)

Under the stated assumptions, \textsf{MASSE} achieves forward privacy and Type-II backward privacy.
\end{proof}

\subsection{Unforgeability against Malicious Clients\label{proof:2}}
We show that \textsf{MASSE} is unforgeable against malicious clients, meaning that no client can generate a valid search token for a keyword it is not authorized to query for any keyword $w^* \notin \overline{W}_\mathcal{C}$.

\begin{theorem}
\label{theo:proofUnf}
Under the Random Oracle Model and assuming that $F_p$ is a secure pseudorandom function, the \textsf{MASSE} scheme is unforgeable against malicious clients. Let $\overline{W}_\mathcal{C}$ denote the set of keywords authorized for the client $\mathcal{C}$.  
For any probabilistic polynomial-time adversary $\mathcal{A}$, the probability of generating a valid search token $\mathsf{Tk}_{w^*}$ for any $w^* \notin \overline{W}_\mathcal{C}$ is negligible:
 \(\mathrm{Adv}_{\textsf{MASSE}}^{\mathrm{Forge}}(\mathcal{A}) \leq \mathsf{negl}(\lambda)\)
\end{theorem}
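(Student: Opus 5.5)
We prove Theorem~\ref{theo:proofUnf} with a short sequence of games ending in a guessing argument. The adversary $\mathcal{A}$ may corrupt and collude with a set of clients $\mathcal{C}_1,\dots,\mathcal{C}_k$. It therefore learns $k_x,k_z$, all pairs $(h_j,v_j)$ for $w_j\in\bigcup_i \overline{W}_{\mathcal{C}_i}$, its own secret keys, and $pk_\mathcal{O}$. It wins if it outputs a token $\mathsf{Tk}_{w^*}$ with pivot $w^*\notin\bigcup_i\overline{W}_{\mathcal{C}_i}$ that passes Algorithm~\ref{alg:search}. Passing means two things: the pairing check succeeds, and the server reaches a non-null Tset entry, i.e.\ $\theta^*=\mathsf{C}[H(\mathsf{wtag}^*)]$ exists and $\theta^*\oplus F(h^*,w^*)=\mathsf{stag}_{w^*}$.

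We will show that winning requires producing, for an unauthorized keyword, at least one of three values: $g_1^{F_p(v^*,w^*)}$, the value $F(h^*,w^*)$, or an element of $\mathbb{G}_1$ satisfying the pairing equation. Each of these is then argued to be infeasible.

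\textbf{Game $\mathsf{G}_0$} is the real forgery experiment.

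\textbf{Game $\mathsf{G}_1$.} We first pass from the key chain $l_j\to h_j\to v_j$ computed under $k_1$, which is never given to clients, to truly random values. Under PRF security of $F$, all $h_{j}$ and $v_{j}$ for keywords whose attribute combination $\bigoplus_{a_t\in\mathsf{A}_j}a_t$ differs from that of every corrupted keyword become independent and uniform. In the same game we replace $F_p(v^*,\cdot)$ by a random function. This is justified because $v^*$ is now a uniform key unknown to $\mathcal{A}$, which costs one $\mathrm{Adv}^{F_p}$ term per unauthorized keyword (a hybrid over at most $D$ keywords).

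\textbf{Game $\mathsf{G}_2$.} We model $H$ as a random oracle. The Cset label $\mathsf{ctag}^*=H(g_1^{r^*})$, with $r^*$ uniform, is then a uniform string. The corresponding $\theta^*=\mathsf{stag}^*\oplus F(h^*,w^*)$ is a one-time-pad encryption of $\mathsf{stag}^*$ under a value $\mathcal{A}$ never sees. To hit the correct Tset chain, $\mathcal{A}$ must therefore submit the exact value $F(h^*,w^*)$, together with a $\mathsf{wtag}$ whose hash is a Cset label for $w^*$.

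We then bound the winning probability by a union bound over the two ways the check can be satisfied. The first is guessing $F(h^*,w^*)$, which succeeds with probability at most $q_H\cdot 2^{-\lambda}$. The second is producing $\mathsf{wtag}^*=g_1^{r^*}$. Since $r^*$ is uniform and independent of the view, and $\mathcal{A}$ only sees the hash $\mathsf{ctag}^*$, finding a preimage costs about $q_H/2^{\lambda}$ in the random oracle model.

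\textbf{Main obstacle: the pairing check.} The check $\hat e(\sigma_\mathcal{C},g_2)=\hat e(A_{\mathsf{xtoken}}\mathsf{wtag}_1,pk_\mathcal{O}^{sk_\mathcal{C}})$ reduces to $A_{\mathsf{xtoken}}\cdot\mathsf{wtag}_1=g_1^{\gamma}$. Here $A_{\mathsf{xtoken}}$ is chosen freely by the client, so the check alone does not bind $\mathsf{wtag}_1$ to an authorized keyword. A client can choose $A_{\mathsf{xtoken}}=g_1^{\gamma}\cdot\mathsf{wtag}^{*-1}$ for any $\mathsf{wtag}^*$.

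Our plan is therefore not to rely on the pairing for unforgeability. Instead we show that unforgeability follows from the secrecy of $(\mathsf{wtag}^*,F(h^*,w^*))$ alone, as in the previous step. For the same reason, collusion gives no advantage for keywords outside the union of the authorized sets: the only material that depends on $w^*$ comes from an independent key chain.

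There remains a delicate case where the PRF step may fail. If two keywords share the same attribute combination, they receive equal $l_j$, hence equal $h_j$ and $v_j$. An authorized client may then know $(h_j,v_j)$ for an unauthorized $w^*$. We will handle this by assuming, or enforcing, that $\bigoplus_{a_t\in\mathsf{A}_j}a_t$ determines authorization: a client holding the key for one keyword with that attribute combination holds the same attribute combination and is thus authorized for every keyword sharing it. Under that assumption the bound holds, and we conclude
\[
\mathrm{Adv}^{\mathrm{Forge}}_{\textsf{MASSE}}(\mathcal{A})\le \mathrm{Adv}^{F}+D\cdot\mathrm{Adv}^{F_p}+\frac{2q_H}{2^\lambda}.
\]
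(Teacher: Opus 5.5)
Your route differs from the paper's, and your key observation is correct. The paper's game gives $\mathcal{A}$ the master keys $K$ and, in the challenge phase, $(h_{w^*},v_{w^*})$. Its Case~1 then rests entirely on the pairing equation after fixing $A_{\mathsf{xtoken}}=g_1^{\gamma}$. Because $A_{\mathsf{xtoken}}$ is chosen by the client, that step fails in the paper's own game. $\mathcal{A}$ computes $\mathsf{wtag}^*=g_1^{F_p(v_{w^*},w^*)}$ and $F(h_{w^*},w^*)$. It sends $A_{\mathsf{xtoken}}=g_1^{\gamma}(\mathsf{wtag}^*)^{-1}$ together with an xtoken for an authorized x-term, and passes every check of Algorithm~\ref{alg:search}. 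So dropping the pairing and basing unforgeability on secrecy of the $k_1$-chain, in a collusion-only model consistent with the threat model, is the right way to get a true statement. Your extra hypothesis on $\bigoplus_{a_t\in\mathsf{A}_j}a_t$ is genuinely needed and should be stated explicitly as a hypothesis. The clean justification: with consistent $\mathsf{W}_t$ and $\mathsf{A}_j$, $w_j\in\overline{W}_\mathcal{C}$ iff $\mathsf{A}_\mathcal{C}\cap\mathsf{A}_j\neq\emptyset$, which depends only on the set $\mathsf{A}_j$. What you need is therefore that distinct attribute sets have distinct XORs.

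The proposal still has two real gaps.

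\textbf{The x-term case is missing.} Your winning condition only lets $w^*$ be the pivot, whereas the theorem (and the paper's Case~2) also covers conjunctive tokens in which $w^*$ is an x-term. Key-chain secrecy says nothing here: the colluders hold $k_x,k_z$, so $g_1^{F_p(k_x,w^*)F_p(k_z,w_1)}$ is computable for every keyword. The only barrier is the test $H(\mathsf{xtoken}[j]^{y_{i,j}})\in\mathsf{Ctoken}_\mathcal{C}$. You must show this test accepts only xtags of authorized pairs. That is, $g_1^{F_p(k_x,w^*)\mathsf{x}_i}$ coincides with some $g_1^{F_p(k_x,w_a)\mathsf{x}_{i'}}$ with $w_a\in\overline{W}_\mathcal{C}$, or $H$ collides, only with negligible probability.

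\textbf{The pivot case is not exhaustive.} $\mathcal{A}$ need not address the Cset entry of $w^*$. It can send the genuine $\mathsf{wtag}_a$ of an authorized $w_a$, so the pairing check passes honestly. In place of $F(h_1,w_1)$ it sends $\phi=F(h_a,w_a)\oplus\mathsf{stag}_a\oplus\mathsf{stag}_{w^*}$, and the server then computes $\theta_a\oplus\phi=\mathsf{stag}_{w^*}$. Ruling this out requires a hop replacing $F(k_T,\cdot)$ by a random function. You also need the observation that neither client keys nor search responses reveal any $\mathsf{stag}$, so $\mathsf{stag}_a\oplus\mathsf{stag}_{w^*}$ is unpredictable. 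Relatedly, your final bound has a single $\mathrm{Adv}^{F}$ term, although separate hops are needed for $k_1$, $k_T$, and each key $h^*$.
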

The proof of Theorem \ref{theo:proofUnf} is given in Appendix \ref{app:proofunf}.

\section{Performance Analysis}
\label{sec: Dis}
This section discusses the performance and practical considerations of \textsf{MASSE}, including implementation details and evaluation results.

\subsection{Theoretical Analysis}
\label{sec:theorical}
As shown in Table~\ref{app:computation_cost}, the $\mathsf{EDBSetup}$ cost for \textsf{OXT} depends solely on the number of keyword--identifier pairs $N$. 
\textsf{MASSE} also scales linearly in $N$, with an additional $D$ term due to the $\mathsf{Cset}$ structure, which captures client-specific keyword permissions. 
On the other hand, \textsf{SEAC} and \textsf{DMSSE} result in higher $\mathsf{EDBSetup}$ costs: in \textsf{SEAC}, setup depends on the number of registered clients, whereas in \textsf{DMSSE}, it scales with the number of keywords each client is authorized to access, leading to data duplication across clients (cf. Table~\ref{app:comm_cost}) and requiring additional bilinear pairing operations.
Thus, although \textsf{MASSE} is more computationally demanding during the enrollment $\mathsf{ClientRegistration}$ process, it achieves a setup that is independent of the total number of clients, as reflected in Table~\ref{app:comm_cost} for both encrypted database size.
Regarding $\mathsf{ClientRegistration}$ in Table~\ref{app:computation_cost}, \textsf{SEAC} and \textsf{DMSSE} are more efficient than \textsf{MASSE} since their registration cost depends either on the number of registered client (\textsf{SEAC}) or on the maximum number of authorized clients per keyword (\textsf{DMSSE}). 
In \textsf{MASSE}, registration scales with the number of keyword--identifier document pairs the client is authorized to search. 
However, this overhead is incurred only once per client and ensures that subsequent operations, including search, communication, and storage, remain independent of the total number of clients. 
This comes at the cost of storing one dictionary for each client. 
This design allows for an efficient revocation of a client without affecting others, providing superior scalability for large multi-client environments.

$\mathsf{TokenGen}$ depends only on the number of queried keywords, whereas in \textsf{OXT} and \textsf{SEAC},
it additionally depends on the number of documents matching the least frequent keyword due to the generation of $\mathsf{xtokens}$.
Consequently, \textsf{MASSE} achieves lower token generation cost than \textsf{OXT} and \textsf{SEAC}. It also achieves smaller token sizes than \textsf{OXT} and \textsf{SEAC}. \textsf{MASSE} remains comparable to \textsf{DMSSE} in terms of computation, communication, and storage.

For $\mathsf{Search}$, \textsf{MASSE} outperforms \textsf{OXT} and \textsf{SEAC} by reducing the number of required exponentiations, and improves upon \textsf{DMSSE}, which relies on more expensive bilinear pairing operations. Finally, \textsf{MASSE} supports constant-time revocation without client
interaction; \textsf{SEAC} however, requires all remaining authorized clients to update their witnesses.

In conclusion, except for the $\mathsf{ClientRegistration}$ operation, \textsf{MASSE} computations outperform \textsf{DMSSE} and \textsf{SEAC}, and even \textsf{OXT} in terms of computation time.

\begin{table}[h]
    \caption{Comparative computation costs for SSE schemes}
    \centering
    \renewcommand{\arraystretch}{0.9}
    \resizebox{\columnwidth}{!}{%
    \begin{tabular}{lcccc}
        \toprule
        \textsf{Scheme} & \textsf{OXT}~\cite{Cash14} & \textsf{DMSSE}~\cite{DU20} & \textsf{SEAC}~\cite{Hu2025SEAC} & \textbf{\textsf{MASSE}} \\
        \midrule
        \textsf{EDBSetup} & $N \cdot \textsf{exp}$ 
                          & $N \cdot \textsf{exp} + N \cdot U\cdot \textsf{bp}$ 
                          & $N \cdot C \cdot \textsf{exp}$ 
                          & \textbf{$(N + D) \cdot \textsf{exp}$} \\
        \textsf{ClientReg.} & -- 
                          & $U \cdot \textsf{exp}$ 
                          & $C \cdot \textsf{exp}$ 
                          & \textbf{{$C \cdot(K\cdot P +2 ) \cdot \textsf{exp}$}} \\
        \textsf{TokenGen} & $(n-1) \cdot P \cdot \textsf{exp}$ 
                          & $n\cdot \textsf{exp}$ 
                          & $n \cdot P \cdot \textsf{exp}$ 
                          & \textbf{{$(n+1) \cdot \textsf{exp}$}} \\
        \textsf{Search} & $(n-1) \cdot P \cdot \textsf{exp}$ 
                        & $P \cdot \textsf{bp}$ 
                        & $n \cdot P \cdot \textsf{exp}$ 
                        & \textbf{{$P \cdot \textsf{exp} + 2 \cdot \textsf{bp}$}} \\
        \textsf{Update} & -- 
                        & -- 
                        & $D \cdot C \cdot \textsf{exp}$ 
                        & \textbf{{$D \cdot \alpha \cdot \textsf{exp}$}} \\
        \textsf{Revocation} & -- 
                            & -- 
                            & $T \cdot \textsf{exp}$ 
                            & \textbf{{$1$}} \\
        \bottomrule
    \end{tabular}%
    }
    \label{app:computation_cost} 
    \vspace{2mm}

  \footnotesize{
    \textsf{exp}: exponentiations; \textsf{bp}: bilinear pairings; 
    $D$: total number of keywords; $N$: number of keyword--document identifier pairs; $n$: number of queried keywords; $C$: total number of registered clients; $U$: maximum number of authorized clients per keyword; $T$: number of non-revoked clients; $P = |\mathrm{DB}(w)|$: maximum number of documents per keyword; $K = |\overline{W}_\mathcal{C}|$: maximum number of keywords per client; $\alpha$: number of dummy entries per keyword, added to hide access patterns.}
\end{table}

\begin{table}[h]
    \caption{Comparative communication and storage overheads}
    \centering
    \resizebox{\columnwidth}{!}{%
    \begin{tabular}{lcc}
        \toprule
        \textsf{Scheme} & \textsf{EDB size + (AC Size*)} & \textsf{Token Size} \\
        \midrule
        \textsf{OXT}~\cite{Cash14} 
        & $(N \cdot (\lambda + |\mathbb{Z}_p^*| + |\mathbb{G}_1|))$
        & $\lambda + (n-1) \cdot P \cdot |\mathbb{G}_1|$ \\
        
        \textsf{DMSSE}~\cite{DU20} 
        & $(N \cdot (\lambda + |\mathbb{G}_2| + U \cdot |\mathbb{G}_T|)
           + (U \cdot D\cdot \lambda)$
        & $2\lambda +(n-1) \cdot |\mathbb{G}_1|$ \\
        
        \textsf{SEAC}~\cite{Hu2025SEAC} 
        & $(N \cdot C \cdot (2\cdot P\cdot D\lambda + |\mathbb{Z}_p^*| + |\mathbb{G}_1|)) 
           + ( |\mathbb{Z}_N^*|)$
        & $n \cdot P \cdot |\mathbb{G}_1| + P \cdot |\mathbb{Z}_N^*|$ \\
        
        \textbf{\textsf{MASSE}} 
        & \textbf{$(N \cdot (\lambda + |\mathbb{Z}_p^*| + |\mathbb{G}_1|) 
           + D \cdot \lambda) 
           + (C \cdot K \cdot P \cdot \lambda+ C\cdot| \mathbb{G}_1|)$}
        & \textbf{$\lambda + n \cdot |\mathbb{G}_1| + |\mathbb{G}_2|$} \\
        \bottomrule
    \end{tabular}%
    }
    
    \vspace{2mm}
    \label{app:comm_cost}
  \footnotesize{
  *: AC size refers to the size of the incorporated access control elements.

  $|\mathbb{Z}_p^*|$, $|\mathbb{Z}_N^*|$, $|\mathbb{G}_1|$, $|\mathbb{G}_2|$, $|\mathbb{G}_T|$: group element sizes;  
  $\lambda$: symmetric ciphertext size.
  For other notations, refer to Table~\ref{app:computation_cost}.
  }
\end{table}

\begin{figure}[h]
  \centering
  \begin{subfigure}[h]{0.70\columnwidth}
    \centering
    \includegraphics[width=\linewidth]{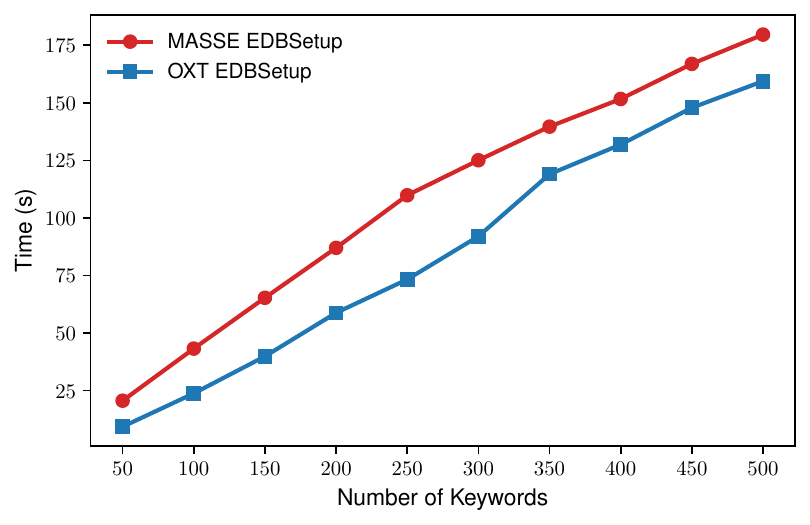}
    \caption{$\mathsf{EDBSetup}$ time for 50--500 keywords.}
    \label{fig:edbgen}
  \end{subfigure}
  \begin{subfigure}[h]{0.70\columnwidth}
    \centering
    \includegraphics[width=\linewidth]{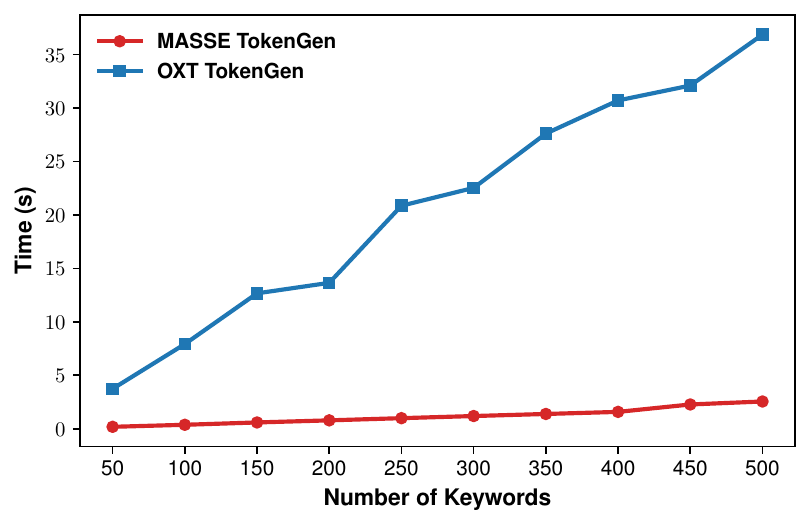}
    \caption{$\mathsf{TokenGen}$ time for 50--500 keywords.}
    \label{fig:tokengen}
  \end{subfigure}
  \begin{subfigure}[h]{0.70\columnwidth}
    \centering
    \includegraphics[width=\linewidth]{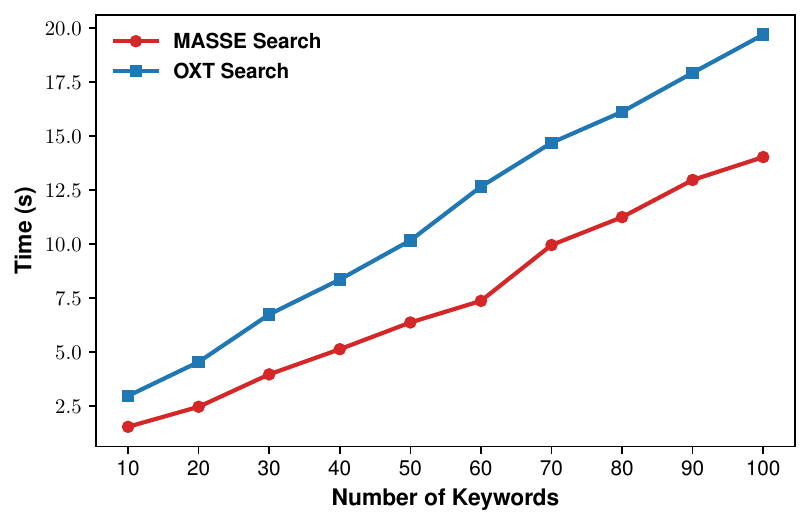}
    \caption{$\mathsf{Search}$ time for 10--100 keywords.}
    \label{fig:search}
  \end{subfigure}
  \caption{Experimental computation time --  \textsf{MASSE} and \textsf{OXT}}
  \label{fig:performance}
\end{figure}
\subsection{Implementation and Settings}
Our experiments were conducted on a host machine running \textit{Windows 11}, equipped with an \textit{Intel}(R) Core(TM) i7-8665U CPU  1.90\,GHz and 32\,GB of RAM. 
The experiments were run inside an \textit{Ubuntu} 22.04 LTS virtual machine configured with 4 virtual CPUs, 22\,GB of RAM, and 50\,GB of allocated storage. 
Our programs are implemented in \textit{C} and compiled using gcc. 
We rely on the \textit{PBC} library for group operations and bilinear pairings, and on \textit{OpenSSL} for symmetric encryption and hash functions. 
We select symmetric elliptic curve groups $(\mathbb{G}_1, \mathbb{G}_2, \mathbb{G}_T)$ defined over the curve $y^2 = x^3 + x$. The security parameter $\lambda$ is set to 256 bits. For cryptographic operations, we use a single secure hash function based on \textsf{HMAC-SHA256} for all hashing and pseudo-random function computations, and \textsf{AES} in CTR mode for encryption.
\subsection{Experimental Results}
The efficiency of \textsf{MASSE} is measured against that of \textsf{OXT}~\cite{Cash14} with respect to the performance of the three processes: the encrypted database generation ($\mathsf{EDBSetup}$), the token generation ($\mathsf{TokenGen}$), and the search ($\mathsf{Search}$). 
Each keyword appears in 200 documents. All reported times are averages over 10 runs.

Fig.~\ref{fig:edbgen} reports the execution time of the
$\mathsf{EDBSetup}$ algorithm. As predicted in Table~\ref{app:computation_cost}, the $\mathsf{EDBSetup}$ cost in both schemes scales linearly with the number of keywords. For a dataset with 500 keywords, 200 documents, and 500 attributes per keyword, \textsf{MASSE} requires 178~s, compared to 149~s for \textsf{OXT}. This moderate overhead mainly arises from constructing the $\mathsf{Cset}$ structure, which involves additional exponentiations. In contrast, attribute processing is based on lightweight PRF operations.

Fig.~\ref{fig:tokengen} presents search token generation times. Unlike \textsf{OXT}, which cost depends both on the number of keywords queried and the frequency of the document of the least frequent keyword, \textsf{MASSE} depends only on the number of keywords queried. Consequently, for a query of 500 keywords, token generation takes about 2~s in \textsf{MASSE}, versus 40~s in \textsf{OXT},
confirming the scalability of \textsf{MASSE} with respect to query size.

Fig.~\ref{fig:search} compares search performance over an encrypted database with 200 keywords, each appearing in 150 documents. For queries ranging from 10 to 100 keywords, \textsf{MASSE} consistently outperforms \textsf{OXT}, even though \textsf{OXT} targets only a single-client setting. In particular, for 100-keyword queries, \textsf{MASSE} completes the search in about 14~s, while \textsf{OXT} requires approximately 19~s. This gap is due to the large number of $\mathsf{xtokens}$ verified in \textsf{OXT}, whereas \textsf{MASSE} significantly reduces the number of membership checks, leading to lower computational overhead during search.
The experimental results are consistent with the theoretical analysis of Section ~\ref{sec:theorical}, confirming that \textsf{MASSE} scales efficiently with both the size of the dataset and the complexity of the query.
\section{Conclusion}
\label{sec:con}
We presented \textsf{MASSE}, a practical dynamic multi-client SSE construction with incorporated attribute-based access control. After a one-time setup by the data owner, authorized clients can perform conjunctive keyword searches without further interaction. 
Access control is enforced through token generation and prevents unauthorized queries, even in the event of collusion between clients. The server verifies search tokens without learning the queried keywords or client attributes. The security of \textsf{MASSE} is formally proven. 
Our performance evaluation in computation shows that \textsf{MASSE} outperforms competing solutions and even \textsf{OXT}, and can scale to large encrypted databases.
\textsf{MASSE} is well-suited to deployment in multi-tenant setting, such as cloud storage and collaborative platforms, where controlled and privacy-preserving searches over encrypted data are required. Its design strikes a balance between expressiveness, security, and efficiency in dynamic environments.

{\footnotesize
\bibliographystyle{abbrvnat}
\bibliography{MASSE}
}
\appendix
\section{Cryptographic Preliminaries}
\label{app:crypto}

This section introduces the cryptographic tools and assumptions used in \textsf{MASSE}. We define bilinear pairings, relevant complexity assumptions, and symmetric encryption with IND-CPA security, which are essential for our construction and analysis.

\subsection{Bilinear Pairings}
\label{pair: paring}

\begin{definition}[Type III Bilinear Pairing]
Let $\mathbb{G}_1$ and $\mathbb{G}_2$ be cyclic groups of prime order $p$. A bilinear pairing is a map $\hat{e}: \mathbb{G}_1 \times \mathbb{G}_2 \to \mathbb{G}_T$ satisfying:
\begin{itemize}
    \item \textsf{Bilinearity:} $\hat{e}(g_1^a, g_2^b) = \hat{e}(g_1, g_2)^{ab}$ for all $g_1 \in \mathbb{G}_1$, $g_2 \in \mathbb{G}_2$, and $a,b \in \mathbb{Z}_p^*$.
    \item \textsf{Non-degeneracy:} $\hat{e}(g_1, g_2) \neq 1$ for some $g_1, g_2$.
    \item \textsf{Efficient Computability:} $\hat{e}(g_1, g_2)$ can be computed efficiently.
\end{itemize}
Type III means no efficient isomorphism exists between $\mathbb{G}_1$ and $\mathbb{G}_2$~\cite{pair}. A parameter generator outputs $(p, \mathbb{G}_1, \mathbb{G}_2, \mathbb{G}_T, \hat{e}, g_1, g_2)$.
\end{definition}

\subsection{Complexity Assumptions}
\label{app:crypto1}

\begin{definition}[Decisional Diffie-Hellman (DDH)]
\label{app:crypto2}
Let $g$ be a generator of a cyclic group $\mathbb{G}$ of prime order $p$. DDH states that no PPT adversary can distinguish:
\[(g, g^a, g^b, g^{ab}) \quad \text{from} \quad (g, g^a, g^b, g^c)\]
for random $a,b,c \in \mathbb{Z}_p^*$. Formally,
\[
\begin{aligned}
\mathrm{Adv}^{\mathrm{DDH}}_{A,\mathbb{G}}(\lambda)
&= \Big|
   \Pr[A(g,g^a,g^b,g^{ab})=1] \\
&\quad - \Pr[A(g,g^a,g^b,g^c)=1]
   \Big| \\
&\le \mathrm{negl}(\lambda)
\end{aligned}
\]
\end{definition}



\begin{definition}(Pseudorandom function).
\label{app:crypto3}
Let $X, Y$ be finite sets, and let $F: \{0,1\}^\lambda \times X \to Y$ be a function. $F$ is a pseudo-random function (PRF) if no probabilistic polynomial-time (PPT) adversary can distinguish between access to $F(k, \cdot)$ for a random key $k$ and access to a truly random function $f \in \mathrm{Fun}(X,Y)$. Formally, the advantage of any adversary $A$ is bounded by a negligible function of the security parameter $\lambda$
\[
\begin{aligned}
\mathrm{Adv}^{\mathrm{PRF}}_{F,A}(\lambda)
&= \left| \Pr[A^{F(k,\cdot)}(1^\lambda)=1]
      - \Pr[A^{f(\cdot)}(1^\lambda)=1] \right| \\
&\le \mathrm{negl}(\lambda)
\end{aligned}
\]
\end{definition}

\subsection{Symmetric Encryption}
\label{app:crypto4}

\begin{definition}(IND-CPA security).
\label{app:crypto5}
Let $\mathsf{SE} = (\mathsf{KeyGen}, \mathsf{Enc}, \mathsf{Dec})$ be a symmetric encryption scheme where 
$\mathsf{Enc}(k,m) \rightarrow ct$ and $\mathsf{Dec}(k,ct) \rightarrow m$, and which correctness is proven as follows: 

$\mathrm{Dec}(k, \mathrm{Enc}(k,m)) = m$.\\
$\mathsf{SE}$ is IND-CPA secure if for any PPT adversary $\mathcal{A}$,  
\[
\begin{aligned}
\mathrm{Adv}^{\mathrm{IND\text{-}CPA}}_{\mathsf{SE},\mathcal{A}}(\lambda) 
  &= \Big| \Pr\big[\mathcal{A}(\mathsf{Enc}_k(m_0))=1\big] \\
  &\quad - \Pr\big[\mathcal{A}(\mathsf{Enc}_k(m_1))=1\big] \Big| \\
  &\leq \mathrm{negl}(\lambda)
\end{aligned}
\]

where $k \leftarrow \mathsf{KeyGen}(1^\lambda)$ and $m_0,m_1$ are equal-length messages.
\end{definition}

\section{Correctness of Client Verification Queries}
\label{appendix:verification}

Let $D_\mathcal{C} = (\sigma_\mathcal{C}, \mathsf{Ctoken}_\mathcal{C})$ be the verification data associated with a registered client.
A client query consists of the tuple:
\[(\mathsf{wtag}_1, A_{\mathsf{xtoken}}, pk_{\mathcal{O}}^{sk_\mathcal{C}}, \, F(h_1, w_1),  \{\mathsf{xtoken}[j]\}_{j=2}^n)\]

\paragraph{Pairing-based verification.}
Let $\overline{W}_\mathcal{C}$ be the set of authorized keywords of the client.
The server checks the following pairing equation:
\[
\begin{aligned}
\hat{e}\bigl(A_{\mathsf{xtoken}} \cdot \mathsf{wtag}_1, pk_{\mathcal{O}}^{sk_\mathcal{C}}\bigr)
&= \hat{e}\Bigl(\prod_{w_j \in \overline{W}_\mathcal{C}} g_1^{F_p(v_j, w_j)}, g_2^{sk_{\mathcal{O}} \cdot sk_\mathcal{C}}\Bigr) \\
&= \hat{e}\Bigl(g_1^{\sum_{w_j \in \overline{W}_\mathcal{C}} F_p(v_j, w_j) \cdot sk_\mathcal{C}}, g_2^{sk_{\mathcal{O}}}\Bigr) \\
&= \hat{e}\Bigl(pk_\mathcal{C}^{\sum_{w_j \in \overline{W}_\mathcal{C}} F_p(v_j, w_j) \cdot sk_{\mathcal{O}}}, g_2\Bigr) \\
&= \hat{e}(\sigma_\mathcal{C}, g_2)
\end{aligned}
\]
This equality holds for any registered client with valid keys and confirms the correctness of the main token.

\paragraph{Verification using $\mathsf{Ctoken}$}
For each additional keyword position $j = 2,\dots,n$, the server computes
\(\mathsf{xtoken}[j]^{y_{i,j}} = g_1^{\mathsf{xtrap}_j \cdot x_i}\), verifies that
\(H\bigl(\mathsf{xtoken}[j]^{y_{i,j}}\bigr) \in \mathsf{Ctoken}_\mathcal{C}.\)
Since $\mathsf{Ctoken}_\mathcal{C}$ contains only hash values associated with authorized keywords, this check guaranties that each auxiliary token is consistent with the client’s authorization set.

Together, pairing verification and $\mathsf{Ctoken}$ membership tests ensure that any query generated by a registered client is correctly verified and accepted if and only if it corresponds to authorized keywords.

\section{Proof of Token's Unforgeability}
\label{app:proofunf}
\begin{proof}
We prove the unforgeability of \textsf{MASSE} stated in Theorem~\ref{theo:proofUnf} via a game-based experiment between a challenger $\mathcal{C}$ and a probabilistic polynomial-time adversary $\mathcal{A}$ modeling a malicious but legitimately registered client.

\textbf{Setup Phase.}  
$\mathcal{C}$ runs $\mathsf{Setup}_{\mathcal{O}}(1^\lambda)$ to generate the public parameters $\mathsf{pp}$ and executes $\mathsf{KeyGen}_{\mathcal{O}}(\mathsf{pp})$ to obtain the data owner's key pair $(sk_{\mathcal{O}}, pk_{\mathcal{O}})$ and the master keys $K$. Public values $(\mathsf{pp}, pk_{\mathcal{O}}, K)$ are given to $\mathcal{A}$, while $sk_{\mathcal{O}}$ remains secret.

The adversary $\mathcal{A}$ generates a key pair $(sk_{\mathcal{A}}, pk_{\mathcal{A}}) \gets \mathsf{KeyGen}_{\mathcal{A}}(\mathsf{pp})$ and submits $pk_{\mathcal{A}}$ together with an attribute set $\mathsf{A}_{\mathcal{A}}$ of its choice. Using $(pk_{\mathcal{A}}, \mathsf{A}_{\mathcal{A}})$, $\mathcal{C}$ executes $\mathsf{ClientRegistration}_{\mathcal{C}}$ to compute the authorized keyword set $\overline{W}_{\mathcal{A}}$,  the accumulator as follows:
\[
\mathsf{Acc}_{\mathcal{A}} = pk_{\mathcal{A}}^\gamma, \quad \gamma = \sum_{w \in \overline{W}_{\mathcal{A}}} F_p(v_w, w),
\]
as well as the signature
\(\sigma_{\mathcal{A}} = \mathsf{Acc}_{\mathcal{A}}^{sk_{\mathcal{O}}}, \)
and the verification structure $\mathsf{Ctoken}_{\mathcal{A}}$.

Finally, $\mathcal{C}$ only sends $\mathcal{A}$ the client-side keys: 
\[
\mathsf{keys}_{\mathcal{A}} = \{k_x, k_z, (w, h_w, v_w)_{w \in \overline{W}_{\mathcal{A}}}, pk_{\mathcal{O}} \},
\]
$\mathcal{A}$ is then able to generate search tokens for authorized keywords, while the signature and verification structures remain fully controlled by $\mathcal{C}$.

\textbf{Challenge Phase.}
$\mathcal{C}$ selects one or several keywords $w^* \notin \overline{W}_{\mathcal{A}}$
(possibly authorized for another client) and provides $\mathcal{A}$ with the corresponding public values $(h_{w^*}, v_{w^*})$.

This models a stronger adversary that may either obtain this public information
through collusion with other registered clients or receive it explicitly as
additional knowledge from $\mathcal{C}$. In both cases, neither signature
$\sigma_{\mathcal{A}}$ nor verification structure $\mathsf{Ctoken}_{\mathcal{A}}$
is updated during this phase.

Using its full knowledge of the system and the challenge keywords,
$\mathcal{A}$ outputs a conjunctive search token $\mathsf{Tk}_{q^*}$
such that $q^*$ contains at least one keyword $w^* \notin \overline{W}_{\mathcal{A}}$.
$\mathcal{A}$ wins if $\mathsf{Tk}_{q^*}$ is accepted by the verification
procedure $\mathsf{Search}$.

We show that the success probability is negligible.
We distinguish two cases.

\textbf{Case 1}: $w^*$ is the pivot keyword. In this case,
\(
A_{\mathsf{xtoken}} = g_1^{\gamma},
\mathsf{wtag}_1^* = g_1^{F_p(v_{w^*}, w^*)}.\)
The pairing verification requires:
\[\hat{e}(\sigma_{\mathcal{A}}, g_2)
=
\hat{e}(g_1^{\gamma + F_p(v_{w^*}, w^*)}, pk_{\mathcal{O}}^{sk_{\mathcal{A}}})
\]
Since
\(
\hat{e}(\sigma_{\mathcal{A}}, g_2)
=
\hat{e}(g_1^{\gamma}, pk_{\mathcal{O}}^{sk_{\mathcal{A}}}),
\)
the equality holds if and only if
\(F_p(v_{w^*}, w^*) \equiv 0 \pmod{p}\)

Because $F_p$ is a pseudo-random function with output space $\mathbb{Z}_p^*$,
all values used to construct elements of $\mathsf{Cset}$ are of the form
$g_1^{F_p(v_j,w_j)}$, where the exponent belongs to $\mathbb{Z}_p^*$.

From the adversary’s point of view, the value $F_p(v_{w^*}, w^*)$ is
computationally indistinguishable from a uniformly random element of
$\mathbb{Z}_p^*$. Hence, $\mathcal{A}$ cannot force or predict any specific
exponent value with non-negligible probability without breaking the PRF
security of $F_p$.

The event $F_p(v_{w^*}, w^*) \equiv 0 \pmod{p}$ can only occur if $F_p$
fails to sample from $\mathbb{Z}_p^*$, which happens with probability exactly
$1/p$, and is therefore negligible in the security parameter $\lambda$.

Even if this negligible event occurs, we obtain $\mathsf{wtag}_1^* = g_1^0 = 1$.
$\mathcal{C}$ then computes $H(\mathsf{wtag}_1^*) = H(1)$ in order to access $\mathsf{Cset}$. However, all valid entries of $\mathsf{Cset}$ correspond to hashes of group elements of the form $H(g_1^{F_p(v_j,w_j)})$ with $F_p(v_j,w_j) \in \mathbb{Z}_p^*$.
Therefore, no entry is indexed by $H(1)$ unless a collision occurs in the hash function $H$, which happens with negligible probability under the collision resistance assumption. As a result, the search procedure does not return document identifiers.

Thus, the success probability in this case is at most $1/p$.

\textbf{Case 2}: $w^*$ is a non-pivot keyword.

For any position $j \geq 2$,
\(\mathsf{xtoken}^*[j] = g_1^{F_p(k_x, w^*) \cdot F_p(k_z, w_1)}.\)
Acceptance requires that, for some retrieved entry,
\[H(\mathsf{xtoken}^*[j]^{y_{i,j}}) \in \mathsf{Ctoken}_{\mathcal{A}}.
\]
By construction, $\mathsf{Ctoken}_{\mathcal{A}}$ contains only hash values
associated with authorized keywords in $\overline{W}_{\mathcal{A}}$.
Therefore, $\mathcal{A}$ must either:
(i) find a collision in $H$, or
(ii) compute a preimage of an existing hash value.
Modeling $H$ as a random oracle with $\lambda$-bit output,
the probability of either event is at most $2^{-\lambda}$.

By the union bound, the overall success probability satisfies
\[\mathrm{Adv}_{\textsf{MASSE}}^{\mathrm{Forge}}(\mathcal{A}) \leq \mathsf{negl}(\lambda)\]
\end{proof}

\end{document}